\newtheorem{definition}{Definition}
\newtheorem{theorem}{Theorem}
\newtheorem{lemma}{Lemma}
\newtheorem{corollary}{Corollary}
\DeclareMathOperator{\trace}{trace}
\begin{document}

\title{Sharma-Mittal Quantum Discord}
\author{Souma Mazumdar\thanks{Email: \texttt{souma.mazumdar@bose.res.in}}, Supriyo Dutta\thanks{Email: \texttt{supriyo.dutta@bose.res.in}} , Partha Guha \thanks{Email: \texttt{partha@bose.res.in}} \\ Depertment of Theoretical Sciences \\ S. N. Bose National Centre for Basic Sciences\\ Block - JD, Sector - III, Salt Lake City, Kolkata - 700 106}
\date{}

\maketitle

\begin{abstract}
We demonstrate a generalization of quantum discord using a generalized definition of von-Neumann entropy, which is Sharma-Mittal entropy; and the new definition of discord is called Sharma-Mittal quantum discord. Its analytic expressions are worked out for two qubit quantum states as well as Werner, isotropic, and pointer states as special cases. The R{\'e}nyi, Tsallis, and von-Neumann entropy based quantum discords can be expressed as limiting case for of Sharma-Mittal quantum discord. We also numerically compare all these discords and entanglement negativity.\\
\textbf{Keywords:} Quantum discord, Sharma-Mittal entropy, R{\'e}nyi entropy, Tsallis entropy, Werner state, isotropic state, pointer state.
\end{abstract}

\section{Introduction}

Entropy is a measure of information generated by random variables. In classical information theory \cite{cover2012elements}, we use Shannon entropy as a measure of information. In quantum information theory, the von-Neumann entropy is the generalization of classical Shannon entropy. In resent years, a number of entropies has been proposed to measure information, for instance the R{\'e}nyi entropy \cite{renyi1961measures}, and Tsallis entropy \cite{tsallis1988possible}, in different contexts. The R{\'e}nyi entropy is a generalization of the Hartley entropy, the Shannon entropy, the collision entropy and the min-entropy. Also, the Tsallis entropy is a generalization of Boltzmann-Gibbs entropy. The Sharma-Mittal entropy generalizes both R{\'e}nyi entropy and Tsallis entropy. Further generalizations of entropy functions are also available in the literature \cite{tempesta2015theorem}, \cite{tempesta2016beyond}.

In quantum information and computation \cite{mcmahon2007quantum}, quantum discord is a well known quantum correlation, which was first introduced in \cite{ollivier2001quantum}, \cite{henderson2001classical} as a quantum mechanical difference between two classically equivalant definitions of mutual information. In classical information theory, the Shannon entropy of a random varible $X$ is defined by $H(X) = - \sum_{X = x}P(X = x)\log(P(X = x))$, where $P$ is the probability function. Given a joint random variable $(X, Y)$, the mutual information $\mathcal{I}(X; Y)$ can be expressed by the following equations: 
\begin{align}
\mathcal{I}(X; Y) & = H(X) + H(Y) - H(X, Y) \label{equation_1}\\
& = H(Y) - H(Y|X) \label{equation_2}.
\end{align}
In quantum information theory, a quantum state is represented by a density matrix $\rho$ which is equivalent to the random variable $X$. Also, the Shannon entropy was generalised by the von-Neumann entropy, given by $H(\rho) = - \sum_i \lambda_i \log(\lambda_i)$, where $\lambda_i$ is an eigenvalue of $\rho$. If $\rho$ represents a bipartite quantum state in $\mathcal{H}^{(a)} \otimes \mathcal{H}^{(b)}$, that is a state distributed between two parties $A$ and $B$, then the reduced density matrices $\rho_a$ and $\rho_b$ are considered as the substitutes of the marginal probability distributions. Then, equation (\ref{equation_1}) can be generalised as quantum mutual information $\mathcal{I}(\rho)$ which is expressed as,
\begin{equation}
\mathcal{I}(\rho) = H(\rho_a) + H(\rho_b) - H(\rho).
\end{equation}
Also, the equation (\ref{equation_2}) can be generalised as
\begin{equation}
\mathcal{S}(\rho) = H(\rho_a) - \max_{\{\Pi_i\}}\left[\sum_i p_i H(\rho^{(i)})\right],
\end{equation}
where the maximization runs over all possible projective measurements $\{\Pi_i\}$ on $\mathcal{H}^{(b)}$, as well as 
\begin{equation}
\rho^{(i)} = (I_a \otimes \Pi_i)^\dagger \rho (I_a \otimes \Pi_i), ~\text{and}~ p_i = \trace(\rho^{(i)}). 
\end{equation}
The quantum discord is defined by $\mathcal{D}(\rho) = \mathcal{I}(\rho) - \mathcal{S}(\rho)$ \cite{ollivier2001quantum}. An analytic expression of quantum discord based on von-Neumann entropy for two qubit states is constructed in \cite{luo2008quantum}. In resent years, the von-Neumann entropy was replaced by the R{\'e}nyi entropy, and the Tsallis entropy \cite{hou2014quantum}, \cite{seshadreesan2015renyi}, \cite{jurkowski2013quantum}. A recent review on quantum discord and its allies is \cite{bera2017quantum}.

In this article, we replace the von-Neumann entropy with the Sharma-Mittal entropy in the definition of quantum discord, first time in literature. Throughout the article, we consider logarithm with respect to the base $2$. The Sharma-Mittal entropy \cite{sharma1975entropy}, \cite{mittal1975some} of a random variable $X$ is denoted by, $H_{q, r}(X = x)$, and defined by,
\begin{equation}\label{Sharma_Mittal_entropy}
H_{q, r}(X = x) = \frac{1}{1 - r}\left[\left(\sum_{X = x}(P(X = x))^q\right)^{\frac{1 - r}{1 - q}} - 1\right],
\end{equation}
where $q$ and $r$ are two real parameters $q > 0, q \neq 1$, and $r \neq 1$ \cite{akturk2007sharma}, \cite{nielsen2011closed}. If limit $r \rightarrow 1$ in the above expression, we get the R{\'e}nyi entropy:
\begin{equation}\label{Renyi_entropy}
H^{(R)}_q(X) = \lim_{r \rightarrow 1} H_{q, r}(X = x) = \frac {1}{1 - q}\log \left(\sum_{X = x} (P(X = x))^q\right),
\end{equation} 
where $q \geq 0$ and $q \neq 1$. Similarly, if $r \rightarrow q$ in the equation (\ref{Sharma_Mittal_entropy}) we get the Tsallis entropy:
\begin{equation}\label{Tsallis_entropy}
H^{(T)}_{q}(X) = \lim_{r \rightarrow q} H_{q, r}(X = x) = \frac{1}{1-q}\left(\sum_{X = x}(P(X = x))^q - 1\right), 
\end{equation}
where $q \geq 0$ and $q \neq 1$. The Sharma-Mittal entropy reduced to Shannon entropy when both $q \rightarrow 1$ and $r \rightarrow 1$, which is
\begin{equation}\label{Shannon_entropy}
\begin{split} 
H(X) = & \lim_{(q, r) \rightarrow (1,1)}H_{q, r}(X = x) = \lim_{q \rightarrow 1}\lim_{r \rightarrow q}H_{q, r}(X = x) = \lim_{q \rightarrow 1}H^{(T)}_{q}(X) \\  
= & -\sum_{x \in X} P(X = x)\log(P(X = x)).
\end{split} 
\end{equation}
Recall that, a density matrix is a positive semi-definite, Hermitian matrix with unit trace. Hence, all its eigenvalues are non-negative. 
\begin{definition}
	Given a density matrix $\rho$ as well as two real numbers $q$ and $r$ with $q > 0, q \neq 1, r \neq 1$, the Sharma-Mittal entropy of $\rho$ is defined by,
	\begin{equation}
	H_{q, r}(\rho) = \frac{1}{1 - r}\left[\left(\sum_i(\lambda_i)^q\right)^{\frac{1 - r}{1 - q}} - 1\right],
	\end{equation}
	where $\lambda_i$s are eigenvalues of $\rho$. 
\end{definition}
The quantum mechanical counterparts of R{\'e}nyi and Tsallis entropy can also be generalised in a similar fashion. Also, the von-Neumann entropy is the alternative of Shannon entropy in quantum mechanical context.

\begin{definition}\label{sm_discord_deff}
	The Sharma-Mittal quantum discord of a quantum state in $\mathcal{H}^{(a)} \otimes \mathcal{H}^{(b)}$, represented by a density matrix $\rho$ is defined by $|\mathcal{D}_{q,r}(\rho)|$, where
	$$\mathcal{D}_{q,r}(\rho) = H_{q,r}(\rho_b) + \max_{\{\Pi_i\}}\left[\sum_i p_i H_{q,r}(\rho_a^{(i)})\right] -  H_{q,r}(\rho).$$ 
\end{definition}
In the above definition, replacing $H_{q,r}$ by $H^{(R)}_{q}$, $H^{(T)}_{q}$ and $H$, we get the R{\'e}nyi, Tsallis, and von-Neumann quantum discord, which are denoted by $|\mathcal{D}^{(R)}_q|$, $|\mathcal{D}^{(T)}_q|$, and $\mathcal{D}$, respectively. It is proved that for all density matrices $\rho$ the von-Neumann discord $\mathcal{D}(\rho) \geq 0$ \cite{datta2010condition}, which may not be true for  $\mathcal{D}_{q,r}(\rho), \mathcal{D}^{(R)}_q(\rho)$, and $\mathcal{D}^{(T)}_q(\rho)$. For instance, consider a Werner state given by $\rho = \begin{bmatrix}.2 & 0 & 0 & 0\\ 0 & .3 & -.1 & 0 \\ 0 & -.1 & .3 & 0 \\ 0 & 0 & 0 & .2 \end{bmatrix}$. We can calculate $\mathcal{D}_{.5, .4}(\rho) = -.3992$, and $\mathcal{D}^{(T)}_{.5}(\rho) = -.3057$. The idea of negative correlation has been considered in the theory of probability, but not well accepted in the quantum information theoretic community, till date.  Therefore, to assure non-negativity we consider the absolute value in the definition \ref{sm_discord_deff}.

The motivation behind this work is to generalize the idea of quantum discord in an unified fashion. Calculating an analytical expression of discord is a challenging task, as it involves an optimization term \cite{huang2014computing}. In the next section, we construct the analytic expression of $\mathcal{D}_{q,r}(\rho)$  for two qubit states, as well as $\mathcal{D}^{(R)}_q$, $\mathcal{D}^{(T)}_q$, and $\mathcal{D}$ as its limiting cases. Also, we compare each of them with entanglement negativity, which is a well known measure of entanglement \cite{horodecki2009quantum}. We also calculate the Sharma-Mittal quantum discord and its specifications for Werner, isotropic, and pointer states. Then, we conclude this article with some open problems.

\section{Discord for two qubit states}

\subsection{Discord in general}

Two qubit states which belongs to the Hilbert space $\mathcal{H}^2 \otimes \mathcal{H}^2$, act as the primary building blocks for encoding correlations in quantum information theory. The computational basis of $\mathcal{H}^2 \otimes \mathcal{H}^2$ is $\{\ket{00}, \ket{01} \ket{10}, \ket{11}\}$. Recall that, the Pauli matrices are
\begin{equation}
 \sigma_1 = \begin{bmatrix} 0 & 1 \\ 1 & 0 \end{bmatrix} , \sigma_2 = \begin{bmatrix} 0 & i \\ -i & 0 \end{bmatrix}, ~\text{and}~ \sigma_3 = \begin{bmatrix} 1 & 0 \\ 0 & -1 \end{bmatrix}.
\end{equation}
The identity matrix of order $2$ is given by $I_2 = \begin{bmatrix} 1 & 0  \\ 0 & 1 \end{bmatrix}$.
In general, any two qubit state is locally unitary equivalent to:
\begin{equation}
\gamma = \frac{1}{4}\left(I + \overrightarrow{a} \overrightarrow{\sigma} \otimes I + I \otimes \overrightarrow{b}\overrightarrow{\sigma} + \sum_{j = 1}^3c_j \sigma_j \otimes \sigma_j \right).
\end{equation} 
For analytical simplicity, and since we are just keen on the correlations in the bipartite states, we consider those states with the maximally mixed marginals, that is, we consider the following states:
\begin{equation}\label{density_matrix}
\rho = \frac{1}{4}\left(I + c_1\sigma_1 \otimes \sigma_1 + c_2\sigma_2 \otimes \sigma_2 + c_3\sigma_3 \otimes \sigma_3 \right),
\end{equation} 
where $c_1, c_2$, and $c_3$ are real numbers. Expanding the tensor products we get,
\begin{equation}\label{density_matrix_expanded}
\rho = \frac{1}{4}\begin{bmatrix} 1+c_{3} & 0 & 0 & c_{1} - c_{2} \\ 0 & 1 - c_{3} & c_{1}+c_{2} & 0 \\ 0 & c_{1}+c_{2} & 1 - c_{3} & 0 \\ c_{1} - c_{2} & 0 & 0 & 1+c_{3} \end{bmatrix}.
\end{equation}

\begin{lemma}\label{values_of_c}
	If a matrix $\rho = \frac{1}{4}\left(I + c_1\sigma_1 \otimes \sigma_1 + c_2\sigma_2 \otimes \sigma_2 + c_3\sigma_3 \otimes \sigma_3 \right),$ represents a density matrix of a quantum state, then $-1 \leq c_i \le 1$ for $i = 1, 2, 3$, and $c_1 + c_2 + c_3 \leq 1$.
\end{lemma}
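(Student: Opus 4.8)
The claim is a necessity statement, so I only need to extract constraints from the hypothesis that $\rho$ is a legitimate density matrix. Hermiticity and unit trace are automatic: each $\sigma_j$ is Hermitian and traceless, $\sigma_j\otimes\sigma_j$ is Hermitian, and $\trace\left(\tfrac14 I\right)=1$. So the only real content is positive semidefiniteness, i.e.\ all eigenvalues of $\rho$ are non-negative. The plan is therefore: (i) compute the four eigenvalues of $\rho$ explicitly, (ii) write down the four inequalities $\lambda_i\ge 0$, and (iii) take suitable sums of pairs of these inequalities to recover $-1\le c_i\le 1$ and $c_1+c_2+c_3\le 1$.

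For step (i) I would use the expanded form (\ref{density_matrix_expanded}). The matrix only couples the basis vectors $\ket{00}\leftrightarrow\ket{11}$ and $\ket{01}\leftrightarrow\ket{10}$, so after the obvious permutation it is block diagonal with two $2\times 2$ blocks, $\tfrac14\begin{bmatrix} 1+c_3 & c_1-c_2 \\ c_1-c_2 & 1+c_3\end{bmatrix}$ and $\tfrac14\begin{bmatrix} 1-c_3 & c_1+c_2 \\ c_1+c_2 & 1-c_3\end{bmatrix}$. Each block has the form $\tfrac14\begin{bmatrix} \alpha & \beta \\ \beta & \alpha\end{bmatrix}$ with eigenvalues $\tfrac14(\alpha\pm\beta)$, giving the four eigenvalues
\[
\lambda_1=\tfrac14(1+c_1-c_2+c_3),\quad
\lambda_2=\tfrac14(1-c_1+c_2+c_3),
\]
\[
\lambda_3=\tfrac14(1+c_1+c_2-c_3),\quad
\lambda_4=\tfrac14(1-c_1-c_2-c_3).
\]
(These can equivalently be obtained from $\rho=\tfrac14(I+\sum_j c_j\,\sigma_j\otimes\sigma_j)$ by noting the four Bell states are common eigenvectors of $\sigma_j\otimes\sigma_j$.)

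For steps (ii)--(iii): positivity forces $1+c_1-c_2+c_3\ge 0$, $1-c_1+c_2+c_3\ge 0$, $1+c_1+c_2-c_3\ge 0$, and $1-c_1-c_2-c_3\ge 0$. The last one is exactly $c_1+c_2+c_3\le 1$. Adding the first and third gives $2+2c_1\ge 0$, i.e.\ $c_1\ge -1$; adding the second and fourth gives $2-2c_1\ge 0$, i.e.\ $c_1\le 1$; the analogous pairings (first$+$fourth and second$+$third for $c_2$; first$+$second and third$+$fourth for $c_3$) yield $-1\le c_2\le 1$ and $-1\le c_3\le 1$. This completes the argument. I do not anticipate a genuine obstacle here: the only place to be careful is bookkeeping in step (i), making sure the $2\times2$ blocks and their eigenvalues are recorded with the correct signs; everything after that is a one-line linear combination.
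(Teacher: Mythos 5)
Your proposal is correct and follows essentially the same route as the paper: list the four eigenvalues $\tfrac14(1\pm c_1\pm c_2\pm c_3)$ (with an even number of minus signs), read off $c_1+c_2+c_3\le 1$ from $\lambda_4\ge 0$, and obtain $-1\le c_i\le 1$ by summing appropriate pairs of the positivity inequalities. The only difference is that you actually derive the eigenvalues via the $2\times 2$ block structure, whereas the paper simply asserts them; your pairings and conclusions match the paper's argument.
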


\begin{proof}
	The eigenvalues of $\rho$ in terms of $c_1, c_2$, and $c_3$ are,
	\begin{equation}\label{eigenvalues_in_gen}
	\begin{split}
	\lambda_0 = & \frac{1}{4}(1 - c_1 - c_2 - c_3), \lambda_1 = \frac{1}{4}(1 - c_1 + c_2 + c_3), \\
	\lambda_2 = & \frac{1}{4}(1 + c_1 - c_2 + c_3), ~\text{and}~ \lambda_3 = \frac{1}{4}(1 + c_1 + c_2 - c_3). \\   
	\end{split}
	\end{equation}
	As $\rho$ is a Hermitian matrix, all the eigenvalues are real. Also, $\rho$ is positive semi-definite matrix. Therefore, $\lambda_i \ge 0$. As $\trace(\rho) = 1$, we have $\lambda_0 + \lambda_1 + \lambda_3 + \lambda_4 = 1$. As $\lambda_0 \geq 0$ we have $c_1 + c_2 + c_3 \leq 1$. In addition, $\lambda_0 + \lambda_1 \geq 0$ indicate $c_1 \leq 1$ and $\lambda_2 + \lambda_3 \geq 0$ indicates $c_1 \geq -1$. Combining we get $-1 \leq c_1 \leq 1$. Similarly, $-1 \leq c_2 \leq 1$, and $-1 \leq c_3 \leq 1$.
\end{proof}

Eigenvalues of a density matrix $\rho$ suggest that the Sharma-Mittal entropy of $\rho$ is  
\begin{equation}\label{sharma_mittal_for_2_qubits}
\begin{split}
H_{q,r}(\rho) = \frac{1}{1 - r}[\frac{1}{4^{\frac{q(1 - r)}{(1 - q)}}}\{(1 - c_1 - c_2 - c_3)^q  + (1 - c_1 + c_2 + c_3)^q & \\
+ (1 + c_1 - c_2 + c_3)^q + (1 + c_1 + c_2 - c_3)^q \}^{\frac{1 - r}{1 - q}}-1].&
\end{split}
\end{equation}
The R{\'e}nyi entropy of $\rho$ will be obtained by taking $\lim_{r \rightarrow 1}H_{q,r}(\rho)$ which is
\begin{equation}\label{renyi_for_2_qubits}
\begin{split} 
H^{(R)}_q(\rho) = & \frac{1}{1 - q}\log(\frac{1}{4^q}\{(1 - c_1 - c_2 - c_3)^q \\ 
& + (1 - c_1 + c_2 + c_3)^q + (1 + c_1 - c_2 + c_3)^q\\
& + (1 + c_1 + c_2 - c_3)^q\}).
\end{split} 
\end{equation}
Similarly, the Tsallis entropy will be $\lim_{r \rightarrow q}H_{q,r}(\rho)$ which is
\begin{equation}\label{tsallis_for_2_qubits}
\begin{split}
H^{(T)}_q(\rho) = \frac{1}{1 - q}[\frac{1}{4^q}\{(1 - c_1 - c_2 - c_3)^q + (1 - c_1 + c_2 + c_3)^q & \\
+ (1 + c_1 - c_2 + c_3)^q + (1 + c_1 + c_2 - c_3)^q \} - 1].
\end{split}
\end{equation}
Also, the von-Neumann entropy of $\rho$ is
\begin{equation}
\begin{split} 
H(\rho) = & -\frac{(1 - c_1 - c_2 - c_3)}{4}\log\frac{(1 - c_1 - c_2 - c_3)}{4}\\
& - \frac{(1 - c_1 + c_2 + c_3)}{4}\log\frac{(1 - c_1 + c_2 + c_3)}{4} \\
& - \frac{(1 + c_1 - c_2 + c_3)}{4}\log\frac{(1 + c_1 - c_2 + c_3)}{4} \\
& + \frac{(1 + c_1 + c_2 - c_3)}{4}\log\frac{(1 + c_1 + c_2 - c_3)}{4}.
\end{split} 
\end{equation}			
\begin{theorem}\label{sm_based_discord_in_gen}
	The Sharma-Mittal quantum discord of quantum state represented by a density matrix $\rho$, mentioned in equation (\ref{density_matrix}), is 
	\begin{equation*}
	\mathcal{D}_{q, r}(\rho) =  \frac{1}{(1 - r)}\left[\left\{\frac{(1 + c)^q}{2^q} + \frac{(1 - c)^q}{2^q} \right\}^{\frac{1 - r}{1 - q}} - 1\right] + \frac{2^{1 - r} - 1}{1 - r} - H_{q, r}(\rho),
	\end{equation*}
	where $c = \max\{|c_1|, |c_2|, |c_3|\}$, as well as $q$ and $r$ are two real numbers, such that $q > 0, q \neq 1, r \neq 1$.
\end{theorem}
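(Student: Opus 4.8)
The plan is to evaluate the three terms of Definition~\ref{sm_discord_deff} for the state $\rho$ of~(\ref{density_matrix}) one at a time; two of them are essentially free. Since $\rho$ has maximally mixed marginals, $\rho_b=\tfrac12 I_2$ has eigenvalues $\tfrac12,\tfrac12$, so
\[
H_{q,r}(\rho_b)=\frac{1}{1-r}\left[(2\cdot 2^{-q})^{\frac{1-r}{1-q}}-1\right]=\frac{1}{1-r}\left[2^{1-r}-1\right],
\]
which is the middle summand of the asserted identity, while $H_{q,r}(\rho)$ is already written out in~(\ref{sharma_mittal_for_2_qubits}). So the whole problem reduces to the optimization term $\max_{\{\Pi_i\}}\sum_i p_i H_{q,r}(\rho_a^{(i)})$.

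For that term I would run the standard two-qubit reduction, exactly parallel to the von-Neumann case. I parametrise a rank-one projective measurement on $\mathcal H^{(b)}$ by $\Pi_i=V\ket{i}\bra{i}V^\dagger$ ($i=0,1$), where $V=tI_2+\mathrm{i}(y_1\sigma_1+y_2\sigma_2+y_3\sigma_3)\in SU(2)$ with $t,y_1,y_2,y_3\in\mathbb R$ and $t^2+y_1^2+y_2^2+y_3^2=1$. Writing $V\sigma_3V^\dagger=\sum_j z_j\sigma_j$, the triple $(z_1,z_2,z_3)$ is a real unit vector (the image of $\hat z$ under the $SO(3)$ rotation induced by $V$; for instance $z_3=t^2+y_3^2-y_1^2-y_2^2$), and it sweeps the whole unit sphere as $V$ varies. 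Forming $(I_a\otimes\Pi_i)\rho(I_a\otimes\Pi_i)$, tracing over $B$, and normalising gives $p_0=p_1=\tfrac12$ together with conditional states $\rho_a^{(0)}=\tfrac12\big(I_2+\sum_j c_j z_j\sigma_j\big)$ and $\rho_a^{(1)}=\tfrac12\big(I_2-\sum_j c_j z_j\sigma_j\big)$, both of which have eigenvalues $\tfrac{1\pm\sqrt\theta}{2}$ with $\theta=c_1^2z_1^2+c_2^2z_2^2+c_3^2z_3^2$. Since $p_0=p_1$ and the two eigenvalue lists agree,
\[
\sum_i p_i H_{q,r}(\rho_a^{(i)})=\frac{1}{1-r}\left[\left(\frac{(1+\sqrt\theta)^q+(1-\sqrt\theta)^q}{2^q}\right)^{\frac{1-r}{1-q}}-1\right]=:g(\theta),
\]
so the optimization over measurements has collapsed to extremizing a single real function.

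Next I would pin down the range of $\theta$ and the monotonicity of $g$. Because $z_1^2+z_2^2+z_3^2=1$ with each $z_j^2\ge0$, the number $\theta$ is a convex combination of $c_1^2,c_2^2,c_3^2$, hence ranges exactly over $[\min_j c_j^2,\, \max_j c_j^2]$, with the right endpoint $\max_j c_j^2=c^2$ realised by the measurement aligned with the coordinate axis on which $|c_j|=c$. Setting $x=\sqrt\theta$ and $\phi(x)=2^{-q}\big((1+x)^q+(1-x)^q\big)$, one has $g=\frac{1}{1-r}\big(\phi^{\frac{1-r}{1-q}}-1\big)$ and $g'(x)=\frac{1}{1-q}\,\phi^{\frac{1-r}{1-q}-1}\phi'(x)$ with $\phi'(x)=q\,2^{-q}\big((1+x)^{q-1}-(1-x)^{q-1}\big)$; on $(0,1)$ the sign of $\phi'(x)$ is the sign of $q-1$, so $g'(x)$ has constant sign $-(\operatorname{sign}(q-1))^2=-1$ there, irrespective of $r$ --- i.e.\ $g$ is strictly decreasing for every admissible pair $q,r$. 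Hence the value of $g$ relevant to the discord is its extremum over the admissible interval, attained at $\sqrt\theta=c$, so the optimization term equals $g(c^2)=\frac{1}{1-r}\big[\{(1+c)^q/2^q+(1-c)^q/2^q\}^{\frac{1-r}{1-q}}-1\big]$ (realised, as for the von-Neumann classical correlation, by the entropy-optimizing measurement along the dominant axis). Adding $H_{q,r}(\rho_b)=\frac{2^{1-r}-1}{1-r}$ and subtracting $H_{q,r}(\rho)$ then yields the stated expression for $\mathcal D_{q,r}(\rho)$.

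The bulk of the work --- and the only genuinely delicate part --- is the middle step: the conjugation-and-partial-trace bookkeeping establishing $p_0=p_1=\tfrac12$, the explicit form of $\rho_a^{(0)},\rho_a^{(1)}$, and the fact that $(z_1,z_2,z_3)$ fills out the unit sphere. This is the Luo-type computation underlying every analytic two-qubit discord formula; here it must be set up with the Sharma-Mittal entropy in view, although the entropy itself only enters afterwards through $g$. Once the conditional states are in hand the reduction to $g(\theta)$ is automatic, and the monotonicity argument is short provided the signs of $1-q$ and $1-r$ are tracked separately (the sign of $g'$ turns out not to depend on $r$ at all).
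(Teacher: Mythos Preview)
Your approach is essentially the paper's: compute $H_{q,r}(\rho_b)$ directly, parametrise rank-one projective measurements on $B$ by $V\in SU(2)$, reduce the post-measurement conditional entropy to a one-variable function $g$ of $\theta=\sum_j c_j^2 z_j^2$, and evaluate at the endpoint corresponding to $c=\max_j|c_j|$. You in fact go further than the paper by \emph{proving} the monotonicity of $g$ --- the paper simply notes $\theta\le c$ and substitutes, without checking which direction the entropy moves.

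That extra care, however, exposes a tension you then paper over with the word ``extremum''. You correctly show $g$ is strictly \emph{decreasing} on $(0,1)$; but Definition~\ref{sm_discord_deff} literally calls for the \emph{maximum} of $\sum_i p_i H_{q,r}(\rho_a^{(i)})$, which your monotonicity puts at the smallest admissible $\theta=\min_j c_j^2$, not at $c^2$. The formula in the theorem --- and every corollary thereafter, including the one that reproduces Luo's von-Neumann result --- instead corresponds to the standard \emph{minimum}-conditional-entropy convention. This inconsistency lives in the paper's definition, not in your argument: your computation is sound for the intended (min) optimisation and delivers exactly the stated expression. Just say so plainly rather than hiding the switch behind ``extremum''.
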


\begin{proof} 
	Recall the definition \ref{sm_discord_deff} of Sharma-Mittal quantum discord. We can check that the reduced density matrices over the subsystems $\mathcal{H}^{(a)}$ and $\mathcal{H}^{(b)}$ are given by $\rho_a = \rho_b = \frac{I_2}{2}$, which is a density matrix which eigenvalues $\frac{1}{2}$. Now, the Sharma-Mittal entropy of $\frac{I}{2}$ is
	\begin{equation}\label{entropy_i_by_2}
	H_{q, r}(\rho_a) = H_{q, r}(\rho_b) = H_{q, r}\left(\frac{I}{2}\right) = \frac{2^{1 - r} - 1}{1 - r}.
	\end{equation}
	The local measurements for the party $B$ along the computational basis $\{\ket{k}\}$ is $\{\ket{k}\bra{k}: k = 0, 1\}$. Now, any von-Neumann measurement for the party $B$ is given by 
	\begin{equation}
	B_k = V\ket{k}\bra{k}V^\dagger: k = 0,  1, V \in U(2).
	\end{equation}
	Any unitary operator $V \in U(2)$ can be expressed as 
	\begin{equation}
	V = t I + iy_1 \sigma_1 + i y_2 \sigma_2 + i y_3 \sigma_3,
	\end{equation}
	where $t, y_1, y_2$, and $y_3 \in \mathbb{R}$ and $t^2 + y_1^2 + y_2^2 + y_3^2 = 1$. After measurement the state $\rho$ will be changed to an ensemble $\{p_k, \rho^{(k)}\}$, where 
	\begin{equation}
	\begin{split} 
	\rho^{(k)} = & \frac{1}{p_k}(I_2 \otimes B_k) \rho (I_2 \otimes B_k) \\
	= & \frac{1}{p_k}(I_2 \otimes V\ket{k}\bra{k}V^\dagger) \rho (I_2 \otimes V\ket{k}\bra{k}V^\dagger) \\
	= & \frac{1}{p_k}(I_2 \otimes V)(I_2 \otimes \ket{k}\bra{k})(I_2 \otimes V)^\dagger \rho (I_2 \otimes V)(I_2 \otimes \ket{k}\bra{k}) (I_2 \otimes V)^\dagger .
	\end{split} 
	\end{equation} 
	Simplifying we get,
	\begin{equation}
	\begin{split}
	\rho^{(0)} = & \frac{1}{2}(I + c_1z_1\sigma_1 + c_2 z_2 \sigma_2 + c_3 z_3 \sigma_3)\otimes (V \ket{0}\bra{0}V^\dagger),  \\ 
	\text{and}~\rho^{(1)} = & \frac{1}{2}(I - c_1z_1\sigma_1 - c_2 z_2 \sigma_2 - c_3 z_3 \sigma_3)\otimes (V \ket{1}\bra{1}V^\dagger),
	\end{split}
	\end{equation} 
	where $z_1 = 2(-ty_2 + y_1y_3), z_2 = 2(ty_1 + y_2y_3)$, and $z_3 = (t^2 + y_3^2 - y_1^2 - y_2^2)$. Also, $p_1 = p_2 = \frac{1}{2}$. It can be verified that $z_1^2 + z_2^2 + z_3^2 = 1$. Both of the matrices $\rho^{(0)}$ and $\rho^{(1)}$ have two zero eigenvalues, and two non-zero eigenvalues, which are $\frac{(1 - \sqrt{c_1^2z_1^2 + c_2^2z_2^2 + c_3^2z_3^2})}{2}$, and $\frac{(1 + \sqrt{c_1^2z_1^2 + c_2^2z_2^2 + c_3^2z_3^2})}{2}$. Let $\theta = \sqrt{c_1^2z_1^2 + c_2^2z_2^2 + c_3^2z_3^2}$. Then the Sharma-Mittal entropies of $\rho^{(0)}$, and $\rho^{(1)}$ are
	\begin{equation}
	H_{q, r}\left(\rho^{(0)}\right) = H_{q, r}\left(\rho^{(1)}\right) = \frac{1}{(1 - r)}\left[\left\{\frac{(1 + \theta)^q}{2^q} + \frac{(1 + \theta)^q}{2^q} \right\}^{\frac{1 - r}{1 - q}} - 1\right].
	\end{equation}
	Now, as $p_k = \frac{1}{2}$, we have $\sum_{k = 0}^1 p_k H_{q, r}(\rho^{(k)}) = H_{q, r}(\rho^{(k)})$, which we need to maximize. Considering, $c = \max\{|c_1|, |c_2, |c_3|\}$, we have $\theta \leq \sqrt{c^2(z_1^2 + z_2^2 + z_3^2)} = c$ which is the maximum value of $\theta$. Putting it the expression of $H_{q, r}(\rho^{(0)})$ we have
	\begin{equation}\label{minimization}
	\begin{split} 
	& \max_\theta \left(H_{q, r}\left(\rho^{(0)}\right)\right) = \max_\theta \left(H_{q, r}\left(\rho^{(1)}\right)\right) \\
	& = \frac{1}{(1 - r)}\left[\left\{\frac{(1 + c)^q}{2^q} + \frac{(1 - c)^q}{2^q} \right\}^{\frac{1 - r}{1 - q}} - 1\right].
	\end{split} 
	\end{equation}
	Adding equation (\ref{entropy_i_by_2}), (\ref{minimization}) and (\ref{sharma_mittal_for_2_qubits}) we get the result.
\end{proof}  	

\begin{corollary}\label{renyi_based_discord_in_gen}
	The R{\'e}nyi quantum discord of a two qubit quantum state given by a density matrix $\rho$ is
	$$\mathcal{D}_q^{(R)}(\rho) = 1 + \frac{1}{1 - q}\log \left[ \left(\frac{1 + c}{2}\right)^q + \left(\frac{1 - c}{2}\right)^q\right] - H^{(R)}_q(\rho),$$
	where $H^{(R)}_q(\rho)$ is given by equation (\ref{renyi_for_2_qubits}).
\end{corollary}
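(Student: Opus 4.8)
The plan is to recover $\mathcal{D}_q^{(R)}(\rho)$ simply as the limit $\lim_{r\to 1}\mathcal{D}_{q,r}(\rho)$ of the closed form established in Theorem~\ref{sm_based_discord_in_gen}, rather than re-running the optimization over measurements. What makes this legitimate is that the proof of Theorem~\ref{sm_based_discord_in_gen} already performed the maximization over von-Neumann measurements on $B$ and exhibited its optimizer, $\theta = c = \max\{|c_1|,|c_2|,|c_3|\}$, which does not involve $r$ at all. Hence the maximization and the limit $r\to 1$ commute, and $\lim_{r\to 1}\mathcal{D}_{q,r}(\rho)$ coincides with the R\'enyi discord $\mathcal{D}_q^{(R)}(\rho)$, i.e.\ with the quantity obtained from Definition~\ref{sm_discord_deff} on replacing $H_{q,r}$ by $H^{(R)}_q$.

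With this reduction, I would take the limit term by term in the formula of Theorem~\ref{sm_based_discord_in_gen}. For the post-measurement term, set $a = \big(\frac{1+c}{2}\big)^q + \big(\frac{1-c}{2}\big)^q$ and $s = \frac{1-r}{1-q}$, so that $\frac{1}{1-r}\big[a^{s} - 1\big] = \frac{1}{1-q}\cdot\frac{a^{s}-1}{s} \longrightarrow \frac{1}{1-q}\log a$ as $r\to 1$ (equivalently $s\to 0$), which is exactly the elementary limit already used to pass from (\ref{Sharma_Mittal_entropy}) to (\ref{Renyi_entropy}). For the marginal term, $\lim_{r\to 1}\frac{2^{1-r}-1}{1-r} = \lim_{r\to1}H_{q,r}(I_2/2) = H^{(R)}_q(I_2/2) = 1$. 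For the last term, $\lim_{r\to 1}H_{q,r}(\rho) = H^{(R)}_q(\rho)$, the R\'enyi entropy recorded in (\ref{renyi_for_2_qubits}). Summing the three contributions gives
$$\mathcal{D}_q^{(R)}(\rho) = 1 + \frac{1}{1-q}\log\left[\left(\frac{1+c}{2}\right)^q + \left(\frac{1-c}{2}\right)^q\right] - H^{(R)}_q(\rho),$$
which is the asserted identity.

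I do not expect a real obstacle: the whole content is the three scalar limits above, each routine, and the only conceptual point — that passing to the limit is compatible with the optimization — is free here because Theorem~\ref{sm_based_discord_in_gen} already displays the optimal measurement with no $r$-dependence. If one wished to sidestep any appeal to interchanging a limit and a maximum, an equivalent route is to reprove the corollary from scratch by repeating the argument of Theorem~\ref{sm_based_discord_in_gen} verbatim with $H^{(R)}_q$ in place of $H_{q,r}$: the reduced states $\rho_a = \rho_b = I_2/2$, the post-measurement ensemble $\{p_k = \frac{1}{2},\, \rho^{(k)}\}$ with non-zero eigenvalues $\frac{1\pm\theta}{2}$, and the bound $\theta \le c$ are untouched, so the maximized measurement entropy equals $H^{(R)}_q$ of the two-point distribution $\big(\frac{1+c}{2},\frac{1-c}{2}\big)$, i.e.\ $\frac{1}{1-q}\log\big[(\frac{1+c}{2})^q + (\frac{1-c}{2})^q\big]$, and adding $H^{(R)}_q(I_2/2) = 1$ and $-H^{(R)}_q(\rho)$ reproduces the same formula.
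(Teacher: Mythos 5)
Your proposal is correct and follows essentially the same route as the paper: both obtain the corollary by letting $r\to 1$ term by term in the formula of Theorem~\ref{sm_based_discord_in_gen}, evaluating $H^{(R)}_q(I_2/2)=1$, the limit of the maximization term (\ref{minimization}), and $H^{(R)}_q(\rho)$. Your explicit remark that the optimizer $\theta=c$ is independent of $r$, so the limit and the maximization commute, is a justification the paper leaves implicit, but it does not change the argument.
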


\begin{proof}
	We have seen that the Sharma-Mittal entropy reduces to the R{\'e}nyi entropy if $r \rightarrow 1$. From equation (\ref{Renyi_entropy}) we get $H^{(R)}_q(\frac{I}{2}) = \frac{1}{1 - q}\log(\frac{1}{2^q} + \frac{1}{2^q}) = 1$. Taking $r \rightarrow 1$, in the equation (\ref{minimization}) we get, $\max_\theta (H_{q, r}(\rho_0)) = \max_\theta (H_{q, r}(\rho_1)) = $
	\begin{equation}
	\frac{1}{1 - q}\log \left[ \left(\frac{1 + c}{2}\right)^q + \left(\frac{1 - c}{2}\right)^q\right].
	\end{equation}
	Hence, the result.
\end{proof}

\begin{corollary}\label{tsallis_based_discord_in_gen}
	The Tsallis discord of a two qubit quantum state given by a density matrix $\rho$ is 
	\begin{equation*}
	\mathcal{D}^{(T)}_q(\rho) = \frac{1}{(1 - q)}\left[\left\{\frac{(1 + c)^q}{2^q} + \frac{(1 - c)^q}{2^q} \right\} - 1\right] + \frac{1}{1 - q}(2^{1 - q} - 1) - H^{(T)}_q(\rho),
	\end{equation*}
	where $H^{(T)}_q(\rho)$ is given by equation (\ref{tsallis_for_2_qubits}).
\end{corollary}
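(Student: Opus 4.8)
The plan is to obtain $\mathcal{D}^{(T)}_q(\rho)$ as the $r \to q$ limiting case of the Sharma-Mittal quantum discord computed in Theorem~\ref{sm_based_discord_in_gen}, in exact analogy with how the Tsallis entropy~(\ref{Tsallis_entropy}) arises from the Sharma-Mittal entropy~(\ref{Sharma_Mittal_entropy}), and with how Corollary~\ref{renyi_based_discord_in_gen} was obtained by sending $r \to 1$. The three building blocks isolated in the proof of Theorem~\ref{sm_based_discord_in_gen} are: the marginal entropy $H_{q,r}(\rho_b) = H_{q,r}(I/2) = \frac{2^{1-r}-1}{1-r}$ from~(\ref{entropy_i_by_2}); the optimized conditional entropy $\frac{1}{1-r}\big[\{(1+c)^q/2^q + (1-c)^q/2^q\}^{\frac{1-r}{1-q}} - 1\big]$ from~(\ref{minimization}), with $c = \max\{|c_1|,|c_2|,|c_3|\}$; and $H_{q,r}(\rho)$ from~(\ref{sharma_mittal_for_2_qubits}). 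I would simply take $r \to q$ in each of the three terms and add.

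The limits are elementary. Since $\lim_{r\to q}\frac{2^{1-r}-1}{1-r} = \frac{2^{1-q}-1}{1-q}$, the marginal term becomes $\frac{1}{1-q}(2^{1-q}-1)$; equivalently, this is what one gets by substituting the eigenvalue $\frac{1}{2}$ (with multiplicity two) into the Tsallis formula~(\ref{Tsallis_entropy}). In~(\ref{minimization}) the exponent $\frac{1-r}{1-q}$ tends to $1$, so the optimized conditional term tends to $\frac{1}{1-q}\big[\frac{(1+c)^q}{2^q} + \frac{(1-c)^q}{2^q} - 1\big]$, which is precisely the Tsallis conditional entropy evaluated at the optimal value $\theta = c$. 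Finally $\lim_{r\to q} H_{q,r}(\rho) = H^{(T)}_q(\rho)$ is, by definition, the expression~(\ref{tsallis_for_2_qubits}) recorded just before the statement. Summing the three limits yields the asserted formula for $\mathcal{D}^{(T)}_q(\rho)$.

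The only point that deserves a remark --- and the only place anything could go wrong --- is the interchange of the limit $r \to q$ with the maximization over projective measurements $\{\Pi_i\}$ implicit in the definition of $\mathcal{D}^{(T)}_q$. This is harmless here because the proof of Theorem~\ref{sm_based_discord_in_gen} reduces that maximization to optimizing the single scalar $\theta = \sqrt{c_1^2 z_1^2 + c_2^2 z_2^2 + c_3^2 z_3^2}$ over the fixed interval $[0,c]$, and the optimal value $\theta = c$ depends on neither $q$ nor $r$. Hence the optimization may be carried out first and the limit taken afterwards, so that the quantity in~(\ref{minimization}) --- already a continuous function of $(q,r)$ alone --- may simply be evaluated at $r = q$. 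With this observation in place I expect no genuine obstacle; the entire content of the corollary is the bookkeeping of the three scalar limits above.
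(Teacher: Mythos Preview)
Your proof is correct and follows essentially the same approach as the paper: both obtain the Tsallis discord by sending $r \to q$ in each of the three terms identified in Theorem~\ref{sm_based_discord_in_gen} (the marginal entropy $H_{q,r}(I/2)$, the optimized conditional term~(\ref{minimization}), and $H_{q,r}(\rho)$). Your additional remark justifying the interchange of the limit with the maximization is a nice point of rigor that the paper's own proof leaves implicit.
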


\begin{proof}
	We know that the Sharma-Mittal entropy reduces to Tsallis entropy if $r \rightarrow q$. From equation (\ref{Tsallis_entropy}) we get $H^{(T)}_q(\frac{I}{2}) = \frac{1}{1 - q}(\frac{1}{2^q} + \frac{1}{2^q} - 1) = \frac{1}{1 - q}(2^{1 - q} - 1)$. Taking $r \rightarrow q$, in the equation \ref{minimization} we get,
	\begin{equation}
	\begin{split} 
	& \max_\theta \left(H^{(T)}_q\left(\rho^{(0)}\right)\right) = \max_\theta \left(H^{(T)}_q\left(\rho^{(1)}\right)\right) \\
	& = \frac{1}{(1 - q)}\left[\left\{\frac{(1 + c)^q}{2^q} + \frac{(1 - c)^q}{2^q} \right\} - 1\right].
	\end{split} 
	\end{equation}
	Hence, the result.
\end{proof}

In the next proof, instead of using the definition of von-Neumann entropy directly, we present the limit operations on $\mathcal{D}_{q,r}(\rho)$, explicitly. It justifies that the conventional idea of quantum discord is a limiting situation of the Sharma-Mittal quantum discord. Our result matches with the expressions derived in \cite{luo2008quantum}.

\begin{corollary}\label{vn_based_discord_in_gen}
	The von-Neumann discord is given by 
	$$\mathcal{D}(\rho) = 2-\left(\frac{1+c}{2}\right) \log\left(1+c\right) -\left(\frac{1-c}{2}\right) \log\left(1-c\right)- H^{(S)}_q(\rho),$$
	where $H^{(S)}_q(\rho)$ is the von-Neumann entropy of $\rho$.
\end{corollary}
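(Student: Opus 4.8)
The plan is to obtain $\mathcal{D}(\rho)$ as the iterated limit $\lim_{q\to 1}\lim_{r\to 1}\mathcal{D}_{q,r}(\rho)$, which by the reductions already recorded amounts to taking $q\to 1$ in the R\'enyi discord of Corollary \ref{renyi_based_discord_in_gen}. So I would start from $\mathcal{D}^{(R)}_q(\rho) = 1 + \frac{1}{1-q}\log\bigl[(\tfrac{1+c}{2})^q + (\tfrac{1-c}{2})^q\bigr] - H^{(R)}_q(\rho)$ and evaluate the $q\to 1$ limit of each of its three summands separately. The first summand is the constant $1$ (it is the exact value of the R\'enyi entropy of $\rho_a=\rho_b=I/2$, which equals $1$ for every $q$, as seen in the proof of Corollary \ref{renyi_based_discord_in_gen}), so it passes through unchanged.

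For the middle term, set $h(q) = (\tfrac{1+c}{2})^q + (\tfrac{1-c}{2})^q$. Since $h(1)=1$, the quantity $\frac{\log h(q)}{1-q}$ is a $0/0$ indeterminate form, and I would resolve it by L'H\^opital's rule in $q$ (equivalently, a first-order Taylor expansion of $a^q = 2^{q\log a}$ about $q=1$), using $\frac{d}{dq}a^q = a^q\ln a$ and $\frac{d}{dq}\log h(q) = h'(q)/(h(q)\ln 2)$, so the factor $\ln 2$ cancels and everything stays in base $2$. This gives $\lim_{q\to 1}\frac{\log h(q)}{1-q} = -\bigl[\tfrac{1+c}{2}\log\tfrac{1+c}{2} + \tfrac{1-c}{2}\log\tfrac{1-c}{2}\bigr]$; rewriting $\log\tfrac{1\pm c}{2} = \log(1\pm c) - 1$ and using $\tfrac{1+c}{2}+\tfrac{1-c}{2}=1$ turns this into $1 - \tfrac{1+c}{2}\log(1+c) - \tfrac{1-c}{2}\log(1-c)$. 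At the boundary $c\to 1$ one invokes the usual convention $0\log 0 = 0$, consistent with continuity of the entropy.

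For the third term I would invoke the standard fact that the R\'enyi entropy tends to the von-Neumann entropy as $q\to 1$; concretely, applying L'H\^opital in $q$ to the explicit expression (\ref{renyi_for_2_qubits}) (again a $0/0$ form since $\sum_i\lambda_i=1$, with the $\lambda_i$ of (\ref{eigenvalues_in_gen})) reproduces $-\sum_i\lambda_i\log\lambda_i = H^{(S)}(\rho)$. Adding the three limits, $1 + \bigl(1 - \tfrac{1+c}{2}\log(1+c) - \tfrac{1-c}{2}\log(1-c)\bigr) - H^{(S)}(\rho)$, yields exactly the claimed formula.

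The only genuinely delicate point is the middle-term limit: one must keep track of the fact that differentiating the exponentials produces natural logarithms while differentiating $\log_2$ produces a compensating $1/\ln 2$, and one must correctly absorb the two separate ``$+1$'' contributions (one from the $I/2$ entropy, one generated by expanding $\log\tfrac{1\pm c}{2}$) into the single overall constant $2$. Everything else is routine bookkeeping, and carrying the limits out explicitly, rather than quoting the von-Neumann entropy directly, makes transparent that ordinary von-Neumann quantum discord is precisely the $(q,r)\to(1,1)$ limiting case of $\mathcal{D}_{q,r}$.
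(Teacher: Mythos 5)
Your proposal is correct and reaches the stated formula, but it travels a slightly different path through the $(q,r)$ parameter plane than the paper does. The paper works directly from Theorem \ref{sm_based_discord_in_gen} and evaluates $\lim_{(q,r)\to(1,1)}$ of the maximization term (\ref{minimization}) by first sending $q\to r$ (which collapses the exponent $\tfrac{1-r}{1-q}$ to $1$, leaving the Tsallis-type indeterminate form $\tfrac{1}{1-r}\bigl[(\tfrac{1+c}{2})^r+(\tfrac{1-c}{2})^r-1\bigr]$) and then applying L'H\^opital in $r$; you instead send $r\to 1$ first, land on the R\'enyi discord of Corollary \ref{renyi_based_discord_in_gen}, and resolve the resulting $0/0$ form $\tfrac{1}{1-q}\log h(q)$ by L'H\^opital in $q$. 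The two iterated limits agree here, and your algebra (the cancellation of $\ln 2$, the rewriting $\log\tfrac{1\pm c}{2}=\log(1\pm c)-1$, and the bookkeeping that produces the overall constant $2$) matches the paper's. Your route has the mild advantage of explicitly reusing Corollary \ref{renyi_based_discord_in_gen} and of spelling out that $H^{(R)}_q(\rho)\to H(\rho)$, a step the paper leaves implicit when it substitutes $H_{q,r}(\rho)\to H^{(S)}(\rho)$ in the final line; the paper's route keeps everything inside the Sharma--Mittal expression and avoids a second logarithmic indeterminacy. Neither argument justifies that the genuine two-variable limit $(q,r)\to(1,1)$ exists and equals the chosen iterated limit, so on that point you are exactly as (in)formal as the paper.
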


\begin{proof}
	We know that the Sharma-Mittal entropy reduces to Shannon entropy if $r \rightarrow 1$ and $q \rightarrow 1$. Now,
	\begin{equation}
		H\left(\frac{I}{2}\right) = \lim_{r \rightarrow 1} H^{(T)}_q\left(\frac{I}{2}\right) = \lim_{r \rightarrow 1} \frac{1}{1 - r}(2^{1 - r} - 1) = \lim_{r \rightarrow 1} 2^{1-r} \log 2 = \log 2 = 1,
	\end{equation}  
	using the {\'L}H{\^o}pital's rule. Taking $r \rightarrow 1$ and $q \rightarrow 1$, in the equation (\ref{minimization}) we get,
	\begin{equation*}
	\begin{split}
	&\lim_{(q, r) \rightarrow (1, 1)} \max_\theta (H_{q, r}(\rho^{(0)})) =  \lim_{(q, r) \rightarrow (1, 1)} \max_\theta (H_{q, r}(\rho^{(1)}))\\
	& = \lim_{(q, r) \rightarrow (1, 1)}\frac{1}{(1 - r)}\left[\left\{\left(\frac{1 + c}{2}\right)^q + \left(\frac{1 - c}{2}\right)^q \right\}^{\frac{1-r}{1-q}} - 1\right] \\
	&=\lim_{r \rightarrow 1}\lim_{q \rightarrow r}\frac{1}{(1 - r)}\left[\left\{\left(\frac{1 + c}{2}\right)^q + \left(\frac{1 - c}{2}\right)^q \right\}^{\frac{1-r}{1-q}} - 1\right] \\
	&=\lim_{r \rightarrow 1}\frac{1}{(1 - r)}\left[\left\{\left(\frac{1 + c}{2}\right)^r + \left(\frac{1 - c}{2}\right)^r \right\} - 1\right].
	\end{split}
	\end{equation*}
	Using the {\'L}H{\^o}pital's rule the above expression, $\lim_{(q, r) \rightarrow (1, 1)} \max_\theta (H_{q, r}(\rho^{(0)})) =  \lim_{(q, r) \rightarrow (1, 1)} \max_\theta (H_{q, r}(\rho^{(1)}))$
	\begin{equation*}
	\begin{split}
	& = - \lim_{r \rightarrow 1} \left[\left(\frac{1+c}{2}\right)^r \log \left(\frac{1+c}{2}\right) + \left(\frac{1-c}{2}\right)^r \log \left(\frac{1-c}{2}\right)\right] \\
	& = -\left(\frac{1+c}{2}\right) \log \left(\frac{1+c}{2}\right) - \left(\frac{1-c}{2}\right) \log \left(\frac{1-c}{2}\right) \\
	& = 1-\left(\frac{1+c}{2}\right) \log \left(1+c\right) - \left(\frac{1-c}{2}\right) \log \left(1-c\right).
	\end{split}
	\end{equation*}
	Applying these limiting values in $\lim_{(q, r) \rightarrow (1,1)}\mathcal{D}_{q,r}(\rho)$, mentioned in the theorem \ref{sm_based_discord_in_gen} we get the result.
\end{proof}

Now, we discuss Sharma-Mittal quantum discord and its specifications for a number of well known two qubit quantum states. For pure states it is simplified to the entropy of a state which is equivalent to the entropy of entanglement.

\begin{corollary}
	The Sharma-Mittal quantum discord of any pure entangled state is nonzero.
\end{corollary}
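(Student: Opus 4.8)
The plan is to show that on a pure state the Sharma-Mittal quantum discord collapses to a single reduced-state entropy, and then that this entropy is nonzero precisely because the state is entangled. First I would evaluate $H_{q,r}$ on a pure state: if $\rho = \ket{\psi}\bra{\psi}$ its spectrum is $\{1,0,0,\dots\}$, and with the convention $0^q = 0$ (valid since $q>0$) one has $\sum_i \lambda_i^q = 1$, so $H_{q,r}(\rho) = \frac{1}{1-r}\bigl(1^{(1-r)/(1-q)} - 1\bigr) = 0$. This applies to every pure state and will be used twice.

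Next I would show the optimization term also vanishes. Fix any von-Neumann measurement $\{\Pi_i = I_a \otimes \ket{b_i}\bra{b_i}\}$ on $\mathcal{H}^{(b)}$. Since $(I_a \otimes \ket{b_i}\bra{b_i})\ket{\psi} = \ket{\phi_i}\otimes\ket{b_i}$ for some (unnormalized) vector $\ket{\phi_i}\in\mathcal{H}^{(a)}$, the conditional state $\rho^{(i)} = \tfrac{1}{p_i}(I_a\otimes\ket{b_i}\bra{b_i})\rho(I_a\otimes\ket{b_i}\bra{b_i})$ is pure, hence so is its reduction $\rho_a^{(i)}$; by the previous paragraph $H_{q,r}(\rho_a^{(i)}) = 0$ for each $i$. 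Therefore $\sum_i p_i H_{q,r}(\rho_a^{(i)}) = 0$ for every $\{\Pi_i\}$, the maximum over measurements is $0$, and Definition \ref{sm_discord_deff} gives $\mathcal{D}_{q,r}(\rho) = H_{q,r}(\rho_b)$.

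The substantive step is to show $H_{q,r}(\rho_b)\neq 0$ when $\ket{\psi}$ is entangled. Writing the Schmidt decomposition $\ket{\psi} = \sum_k \sqrt{\mu_k}\,\ket{a_k}\ket{b_k}$, entanglement means the Schmidt rank is at least $2$, so the eigenvalues $\mu_k$ of $\rho_b$ form a probability vector with at least two strictly positive entries, each lying in $(0,1)$. Since $r\neq 1$, both $\tfrac{1}{1-r}$ and the exponent $\tfrac{1-r}{1-q}$ are nonzero, so $H_{q,r}(\rho_b)=0$ would force $\sum_k \mu_k^q = 1$; but $t^q < t$ for $t\in(0,1)$ when $q>1$ and $t^q > t$ for $t\in(0,1)$ when $0<q<1$, so summing the strict inequality over $k$ and using $\sum_k\mu_k = 1$ gives $\sum_k \mu_k^q < 1$ (respectively $>1$), a contradiction. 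Hence $H_{q,r}(\rho_b)\neq 0$ and $|\mathcal{D}_{q,r}(\rho)| > 0$. The main obstacle is really just this last inequality together with the bookkeeping fact that a projective measurement on one half of a pure bipartite state leaves the other half pure; everything else is a direct computation. The same argument with $H_{q,r}$ replaced by $H^{(R)}_q$, $H^{(T)}_q$, or $H$ yields the analogous statements for the R{\'e}nyi, Tsallis, and von-Neumann discords.
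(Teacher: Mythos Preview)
Your proof is correct and follows essentially the same route as the paper's: show $H_{q,r}(\rho)=0$ for pure $\rho$, observe that projective measurement on $B$ leaves pure post-measurement states so the maximization term vanishes, and conclude $\mathcal{D}_{q,r}(\rho)=H_{q,r}(\rho_b)$, which is nonzero because $\rho_b$ is mixed. The only difference is one of rigor: the paper simply asserts that a mixed $\rho_b$ has $H_{q,r}(\rho_b)\neq 0$, whereas you actually prove it via the Schmidt decomposition and the strict inequality $\sum_k \mu_k^q \lessgtr 1$ for $q\gtrless 1$ when at least two $\mu_k$ lie in $(0,1)$---a useful detail the paper omits.
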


\begin{proof}
	The density matrix $\rho = \ket{\psi}\bra{\psi}$ representing a pure state $\ket{\psi}$ has only one non-zero eigenvalue, which is $1$. Therefore, $H_{q,r}(\rho) = 0$. A set of measurement operators produce an ensemble of pure states from a given pure state $\ket{\psi}$. Therefore, the Sharma-Mittal entropy for all these states are also $0$, that is $\max_{\{\Pi_i\}}\left[\sum_i p_i H_{q,r}(\rho_a^{(i)})\right] = 0$. Now, the definition \ref{sm_discord_deff} suggests that $\mathcal{D}_{q,r}(\rho) = H_{q,r}(\rho_b)$. If $\ket{\psi}$ is an entangled state, $\rho_b$ is mixed state and $H_{q,r}(\rho_b) \neq 0$. Therefore, $\mathcal{D}_{q,r}(\rho) \neq 0$. 
\end{proof}

Similarly, the R{\'e}nyi and Tsallis discord for pure bipartite entangled state is nonzero and reduces to $H^{(R)}_{q}(\rho_b)$, and $H^{(T)}_{q}(\rho_b)$, respectively.

\subsection{Werner State}

A $d \times d$ dimensional Werner state \cite{werner1989quantum} is a mixture of  symmetric and antisymmetric projection operators, represented by the density matrix,
\begin{equation}
\rho = \frac{2p}{d^2 + d}P_{sym} + \frac{2(1 - p)}{d^2 - d}P_{ass},
\end{equation}
where $0 \leq p \leq 1$, $P_{sym}=\frac{1}{2}(1+P)$, and $P_{ass}=\frac{1}{2}(1-P)$ are the projectors as well as $P = \sum_{ij}\ket {i}\bra{j} \otimes \ket{j}\bra{i}$.
When $d = 2$, the density matrix representing a Werner state is given by 
\begin{equation}\label{werner_density_matrix}
\rho = \begin{bmatrix} \frac{p}{3} & 0 & 0 & 0 \\
0 & -\frac{p}{3} + \frac{1}{2} & \frac{2p}{3}-\frac{1}{2} & 0 \\ 0 & \frac{2p}{3}-\frac{1}{2} & -\frac{p}{3}+\frac{1}{2} & 0 \\ 0 & 0 & 0 & \frac{p}{3} \end{bmatrix}.
\end{equation}

\begin{theorem}
	The Sharma-Mittal quantum discord for $2$-qubit Werner state mentioned in equation (\ref{werner_density_matrix}) is
	\begin{equation*}
	\begin{split}
	\mathcal{D}_{q,r}(\rho) = \frac{2^{1 - r} - 1}{1 - r} + \frac{1}{1-r}\left[\left\{ \left(\frac{3+|4p-3|}{6}\right)^q + \left(\frac{3-|4p-3|}{6}\right)^q\right\}^{\frac{1 - r}{1 - q}}\right]& \\
	- \frac{1}{1-r}\left[\left\{(1 - p)^q + 3\left(\frac{p}{3}\right)^q\right\}^{\frac{1-r}{1-q}}\right]&.
	\end{split}
	\end{equation*}
\end{theorem}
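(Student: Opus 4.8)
The plan is to recognize the Werner density matrix (\ref{werner_density_matrix}) as a particular instance of the Bloch-normal family (\ref{density_matrix_expanded}) and then simply feed the resulting parameters into Theorem \ref{sm_based_discord_in_gen}. First I would match the two matrices entry by entry: the $(1,1)$ entries force $\frac{1+c_3}{4} = \frac{p}{3}$, the central off-diagonal block forces $\frac{c_1+c_2}{4} = \frac{2p}{3} - \frac{1}{2}$, and the vanishing corner entries of the Werner state force $c_1 - c_2 = 0$, i.e.\ $c_1 = c_2$. Solving these relations gives $c_1 = c_2 = c_3 = \frac{4p-3}{3}$, and one notes that $p \in [0,1]$ keeps these within the admissible range of Lemma \ref{values_of_c}. (Equivalently one checks $\rho_a = \rho_b = I_2/2$, so the Werner state genuinely lies in the family (\ref{density_matrix}) to which Theorem \ref{sm_based_discord_in_gen} applies.)

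Next I would read off $c = \max\{|c_1|,|c_2|,|c_3|\} = \frac{|4p-3|}{3}$, so that $\frac{1\pm c}{2} = \frac{3 \pm |4p-3|}{6}$, and substitute into the optimized measurement term of Theorem \ref{sm_based_discord_in_gen}. This produces the contribution $\frac{1}{1-r}\big[\big\{\big(\tfrac{3+|4p-3|}{6}\big)^q + \big(\tfrac{3-|4p-3|}{6}\big)^q\big\}^{\frac{1-r}{1-q}} - 1\big]$, while the marginal term $H_{q,r}(\rho_b) = H_{q,r}(I/2) = \frac{2^{1-r}-1}{1-r}$ is taken verbatim from (\ref{entropy_i_by_2}).

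For $H_{q,r}(\rho)$ I would substitute $c_1 = c_2 = c_3 = \frac{4p-3}{3}$ into the eigenvalue formulas (\ref{eigenvalues_in_gen}): three of the four eigenvalues collapse to $\frac{1+c_1}{4} = \frac{p}{3}$ and the remaining one is $\frac{1-3c_1}{4} = 1-p$, which is consistent with the diagonal of (\ref{werner_density_matrix}) and the spectrum of its central $2\times2$ block. Hence $H_{q,r}(\rho) = \frac{1}{1-r}\big[\big\{(1-p)^q + 3(p/3)^q\big\}^{\frac{1-r}{1-q}} - 1\big]$ from the Definition of Sharma--Mittal entropy of $\rho$. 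Assembling the three pieces according to Definition \ref{sm_discord_deff}, the constant $-1/(1-r)$ from the measurement bracket cancels against the $+1/(1-r)$ generated by subtracting the $-1$ inside $H_{q,r}(\rho)$, leaving the marginal term $\frac{2^{1-r}-1}{1-r}$ intact, which is exactly the claimed expression.

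There is no genuine obstacle here: the statement is essentially a substitution into Theorem \ref{sm_based_discord_in_gen}. The only points requiring a little care are (i) confirming at the outset that the Werner state really belongs to the family (\ref{density_matrix}), since the theorem is stated only for states with maximally mixed marginals, and (ii) carrying the absolute value $|4p-3|$ correctly, which appears precisely because $c_1 = \frac{4p-3}{3}$ changes sign at $p = 3/4$ while the discord formula only depends on $|c_i|$.
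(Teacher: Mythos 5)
Your proposal is correct and follows essentially the same route as the paper: identify $c_1=c_2=c_3=\frac{4p-3}{3}$ by matching entries with equation (\ref{density_matrix_expanded}), read off $c=\frac{|4p-3|}{3}$ and the eigenvalues $1-p$ and $\frac{p}{3}$ (with multiplicity $3$), and substitute into Theorem \ref{sm_based_discord_in_gen}. Your added checks --- that the marginals are maximally mixed and that the two $\frac{1}{1-r}$ constants cancel --- are sound and consistent with the paper's computation.
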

\begin{proof}
	Comparing the density matrix in equation (\ref{werner_density_matrix}) with density matrix considered in the equation (\ref{density_matrix_expanded}) we get, 
	\begin{equation} 
	1+c_{3}=\frac{4p}{3}; -c_{2}+c_{1} = 0 ; c_{1}+c_{2}=-2+\frac{8p}{3}; -c_{3}+1=2-\frac{4p}{3}.
	\end{equation}
	They indicate $c_{1} = c_{2} = c_{3} = \frac{4p}{3}-1$. Hence, $c = \max\{|c_1|, |c_2|, |c_3|\} = |\frac{4p}{3}-1|$. Also, the eigenvalues of $\rho$ are $(1 - p)$, and $\frac{p}{3}$ with multiplicity $3$. Therefore, the Sharma-Mittal entropy for a given isotropic state $\rho$ is 
	\begin{equation}
	H_{q,r}(\rho) = \frac{1}{1-r}\left[\left\{(1 - p)^q + 3\left(\frac{p}{3}\right)^3\right\}^{\frac{1-r}{1-q}}-1 \right].
	\end{equation}
	Also, replacing $c = |\frac{4p}{3}-1|$ in equation (\ref{minimization}) the maximization term reduces to
	\begin{equation}
	\begin{split} 
	& \max_\theta (H_{q, r}(\rho^{(0)})) = \max_\theta (H_{q, r}(\rho^{(1)})) \\
	& = \frac{1}{1 - r}\left[\left\{ \left(\frac{3+|4p-3|}{6}\right)^q + \left(\frac{3-|4p-3|}{6}\right)^q\right\}^{\frac{1 - r}{1 - q}} - 1\right].
	\end{split} 
	\end{equation}
	Combining we get the expression of the Sharma-Mittal quantum discord of $2$-qubit Werner state, mentioned in the statement.
\end{proof}

The Sharma Mittal discord for Werner state with respect to the state parameter $p$ and entropy parameter $q$ is plotted in fig \ref{Sharma_Werner_par_q}. Also, the figure \ref{Sharma_Werner_par_r} shows the Sharma-Mittal discord of Werner state with respect to $p$ and $r$ keeping $p$ unchanged.
\begin{figure}
	\centering
	\includegraphics[scale = .3]{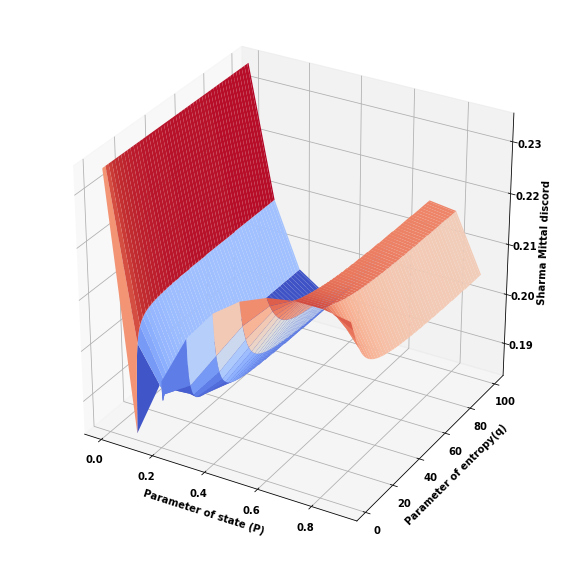}
	\caption{Sharma Mittal discord is plotted for Werner state with respect to the state parameter $p$ and entropy parameter $q$ keeping $r=5$. (Colour online)}
	\label{Sharma_Werner_par_q}
\end{figure}

\begin{figure}
	\centering
	\includegraphics[scale = .3]{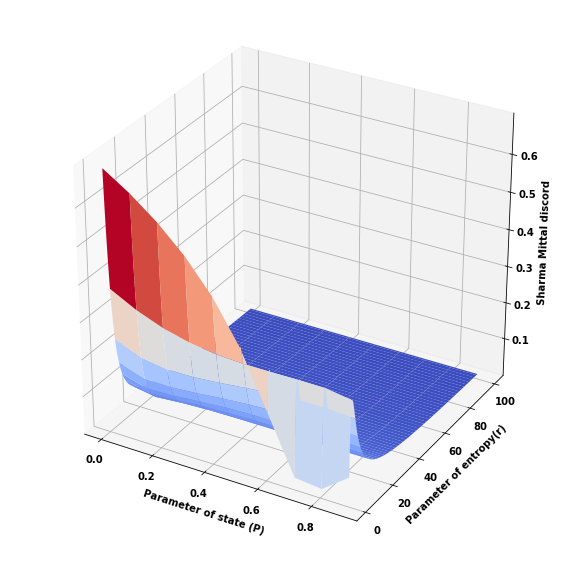}
	\caption{Sharma Mittal discord is plotted for Werner state with respect to the state parameter $p$ and entropy parameter $r$ keeping $q=5$. (Colour online)}
	\label{Sharma_Werner_par_r}
\end{figure}

\begin{corollary}
	The R{\'e}nyi discord for $2$-qubit Werner state $\rho$ is given by 
	\begin{equation}
	\begin{split}
	\mathcal{D}^{(R)}_q(\rho) = 1 - \frac{1}{1 - q} \left[\log \left\{\left(\frac{3+|4p-3|}{6}\right)^q + \left(\frac{3-|4p-3|}{6}\right)^q\right\}\right] & \\
	+ \frac{1}{1 - q} \left[\log \left\{3\left(\frac{p}{3}\right)^q + (1 - p)^q\right\}\right]&.
	\end{split}
	\end{equation}
\end{corollary}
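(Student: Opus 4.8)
The plan is to obtain this formula as an immediate specialization, via either of two interchangeable routes: apply the general two-qubit R\'enyi discord expression of Corollary~\ref{renyi_based_discord_in_gen} to the Werner data, or pass to the limit $r \to 1$ in the Sharma-Mittal Werner discord established just above.

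\textbf{Direct route.} From the proof of the preceding theorem the Werner state (\ref{werner_density_matrix}) has $c_1 = c_2 = c_3 = \frac{4p}{3} - 1$, hence $c = \max\{|c_1|,|c_2|,|c_3|\} = \bigl|\frac{4p}{3}-1\bigr| = \frac{|4p-3|}{3}$, so that $\frac{1+c}{2} = \frac{3+|4p-3|}{6}$ and $\frac{1-c}{2} = \frac{3-|4p-3|}{6}$. Next I would record the R\'enyi entropy of $\rho$: its spectrum is $1-p$ (simple) together with $\frac{p}{3}$ with multiplicity three, so (\ref{Renyi_entropy}) gives $H^{(R)}_q(\rho) = \frac{1}{1-q}\log\bigl[(1-p)^q + 3(\frac{p}{3})^q\bigr]$. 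Substituting these two ingredients into $\mathcal{D}_q^{(R)}(\rho) = 1 + \frac{1}{1-q}\log\bigl[(\frac{1+c}{2})^q + (\frac{1-c}{2})^q\bigr] - H^{(R)}_q(\rho)$ from Corollary~\ref{renyi_based_discord_in_gen} and collecting terms yields the asserted identity.

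\textbf{Limiting route.} Alternatively I would start from the Sharma-Mittal Werner discord and send $r \to 1$. The term $\frac{2^{1-r}-1}{1-r}$ tends to $\log 2 = 1$ by L'H\^opital's rule, exactly as in the proof of Corollary~\ref{vn_based_discord_in_gen}. The remaining two blocks cannot be handled termwise, since each is an indeterminate of the form $\frac{\mathrm{const}}{0}$; the remedy is to restore the ``$-1$'' that was cancelled between the maximization term and $-H_{q,r}(\rho)$ in the theorem's statement, so that each block has the genuine entropy shape $\frac{1}{1-r}\bigl[\{X\}^{\frac{1-r}{1-q}} - 1\bigr]$, and then use $\{X\}^{\frac{1-r}{1-q}} = \exp\bigl(\frac{1-r}{1-q}\ln X\bigr) = 1 + \frac{1-r}{1-q}\ln X + o(1-r)$ to conclude that the block converges to $\frac{1}{1-q}\log X$. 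Taking $X = (\frac{3+|4p-3|}{6})^q + (\frac{3-|4p-3|}{6})^q$ and $X = (1-p)^q + 3(\frac{p}{3})^q$ recovers the maximization contribution and $-H^{(R)}_q(\rho)$ respectively, and assembling the three limits reproduces the stated formula.

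\textbf{Anticipated obstacle.} Neither route contains a genuine difficulty; the only point needing care is the indeterminate-form bookkeeping in the limiting route — one must reinstate the cancelled constants before letting $r \to 1$, and keep every logarithm in base $2$ (so that $H^{(R)}_q(I/2) = 1$). I would present the direct route as the main argument and record the limiting route as a consistency check against the Sharma-Mittal expression.
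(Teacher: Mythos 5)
Your direct route is exactly the paper's own proof: identify $c_1=c_2=c_3=\frac{4p}{3}-1$, hence $c=\frac{|4p-3|}{3}$, record the spectrum $\{1-p,\frac{p}{3},\frac{p}{3},\frac{p}{3}\}$, and substitute into Corollary~\ref{renyi_based_discord_in_gen}; your limiting route is also sound, and the remark about reinstating the cancelled ``$-1$''s before sending $r\to1$ is the right bookkeeping. However, there is one genuine problem: the substitution does \emph{not} ``yield the asserted identity.'' Corollary~\ref{renyi_based_discord_in_gen} reads $\mathcal{D}^{(R)}_q(\rho)=1+\frac{1}{1-q}\log[(\frac{1+c}{2})^q+(\frac{1-c}{2})^q]-H^{(R)}_q(\rho)$, and equation~(\ref{Renyi_entropy}) gives $H^{(R)}_q(\rho)=+\frac{1}{1-q}\log[(1-p)^q+3(\frac{p}{3})^q]$ with a \emph{plus} sign; collecting terms therefore produces
\begin{equation*}
\mathcal{D}^{(R)}_q(\rho)=1+\frac{1}{1-q}\log\left[\left(\frac{3+|4p-3|}{6}\right)^q+\left(\frac{3-|4p-3|}{6}\right)^q\right]-\frac{1}{1-q}\log\left[(1-p)^q+3\left(\frac{p}{3}\right)^q\right],
\end{equation*}
that is, the stated formula with the signs of both logarithmic terms reversed. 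The printed statement is reached in the paper only because its proof writes $H^{(R)}_q(\rho)=-\frac{1}{1-q}\log[\cdots]$ and puts a matching spurious minus sign on the maximization term, contradicting (\ref{Renyi_entropy}), (\ref{renyi_for_2_qubits}), and the computation $H^{(R)}_q(I/2)=+\frac{1}{1-q}\log(2^{1-q})=1$ used inside Corollary~\ref{renyi_based_discord_in_gen}. A quick consistency check shows your sign-corrected version is the right one: as $q\to1$ it tends to the von-Neumann Werner discord obtained from Corollary~\ref{vn_based_discord_in_gen} (about $0.05$ at $p=0.6$), whereas the printed formula tends to roughly $1.95$ there. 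So you should either state and prove the corrected formula or explicitly flag the discrepancy; as written, the claim that ``collecting terms yields the asserted identity'' is false. (A minor further caveat for the limiting route: $\lim_{r\to1}\frac{1}{1-r}[X^{\frac{1-r}{1-q}}-1]=\frac{\ln X}{1-q}$ is a natural logarithm, so matching the base-$2$ convention of (\ref{Renyi_entropy}) costs an overall factor of $\log_2 e$ --- the same factor silently absorbed in the paper's claim that $\frac{2^{1-r}-1}{1-r}\to1$.)
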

\begin{proof}
	The R{\'e}nyi entropy of two qubit Werner state is
	\begin{equation}
	H^{(R)}_q(\rho) = -\frac{1}{1 - q} \log \left[3\left(\frac{p}{3}\right)^q + (1 - p)^q\right].
	\end{equation}
	Also, the maximization term is given by 
	\begin{equation}
	\begin{split} 
	& \max_\theta (H^{(R)}_q(\rho^{(0)})) = \max_\theta (H^{(R)}_q(\rho^{(1)}))\\
	& = -\frac{1}{1 - q}\log \left[\left\{ \left(\frac{3+|4p-3|}{6}\right)^q + \left(\frac{3-|4p-3|}{6}\right)^q\right\}\right].
	\end{split} 
	\end{equation}
	Now, combining we get the result from corollary \ref{renyi_based_discord_in_gen}.
\end{proof}
The R{\'e}nyi discord $\mathcal{D}_q^{(R)}(\rho)$ is plotted with respect to $p$ and $q$ in the figure \ref{Renyi_werner}.
\begin{figure}
	\centering
	\includegraphics[scale = .3]{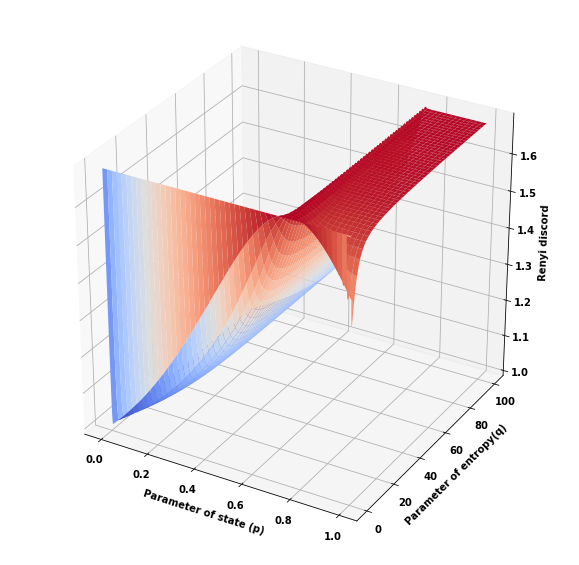}
	\caption{R{\'e}nyi discord is plotted for Werner state with respect to the state parameter $p$ and entropy parameter $q$. (Colour online)}
	\label{Renyi_werner}
\end{figure}

\begin{corollary}
	The Tsallis discord of $2$-qubit Werner state is given by
	\begin{equation*}
	\begin{split}
	\mathcal{D}^{(T)}_q(\rho) = \frac{1}{1 - q}[\left(\frac{3+|4p-3|}{6}\right)^q + \left(\frac{3-|4p-3|}{6}\right)^q + (2^{1 - q} - 1)& \\
	- 3\left(\frac{p}{3}\right)^q - (1 - p)^q ]&.
	\end{split} 
	\end{equation*} 
\end{corollary}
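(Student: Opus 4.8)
The plan is to deduce this directly from Corollary \ref{tsallis_based_discord_in_gen}, recycling the two computations already performed in the proof of the Sharma-Mittal Werner discord theorem just above: matching the Werner density matrix (\ref{werner_density_matrix}) against the canonical form (\ref{density_matrix_expanded}) forces $c_1 = c_2 = c_3 = \frac{4p}{3} - 1$, so that $c = \max\{|c_1|,|c_2|,|c_3|\} = \left|\frac{4p}{3}-1\right| = \frac{|4p-3|}{3}$; and the eigenvalues of $\rho$ are $1-p$ (simple) together with $\frac{p}{3}$ with multiplicity three. Both facts are established there and may be quoted verbatim.

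First I would translate $c = \frac{|4p-3|}{3}$ into the measurement term of Corollary \ref{tsallis_based_discord_in_gen}: $\frac{1+c}{2} = \frac{3+|4p-3|}{6}$ and $\frac{1-c}{2} = \frac{3-|4p-3|}{6}$, hence $\frac{(1+c)^q}{2^q} + \frac{(1-c)^q}{2^q} = \left(\frac{3+|4p-3|}{6}\right)^q + \left(\frac{3-|4p-3|}{6}\right)^q$. Next I would feed the eigenvalue list into the two-qubit Tsallis entropy formula (\ref{tsallis_for_2_qubits}) — equivalently, check that $1 \pm c_1 \pm c_2 \pm c_3$ evaluated at $c_1=c_2=c_3=\frac{4p}{3}-1$ gives $4(1-p)$ once and $\frac{4p}{3}$ three times — to obtain $H^{(T)}_q(\rho) = \frac{1}{1-q}\left[(1-p)^q + 3\left(\frac{p}{3}\right)^q - 1\right]$. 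Then, together with the value $H^{(T)}_q\!\left(\frac{I}{2}\right) = \frac{1}{1-q}\left(2^{1-q}-1\right)$ recorded in the proof of Corollary \ref{tsallis_based_discord_in_gen}, I would substitute all three pieces into that corollary's formula and collect everything over the common prefactor $\frac{1}{1-q}$; the various $\pm 1$ additive constants collapse, leaving a single $2^{1-q}-1$, which is exactly the stated expression.

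An equally short alternative is to take $r \to q$ in the Sharma-Mittal Werner discord theorem proved immediately above: since $\frac{1-r}{1-q} \to 1$ and $\frac{1}{1-r} \to \frac{1}{1-q}$ as $r \to q$, each block of the form $\frac{1}{1-r}\left[\{\cdots\}^{\frac{1-r}{1-q}} - 1\right]$ converges to $\frac{1}{1-q}\left[\{\cdots\} - 1\right]$, and the three blocks reassemble into the claimed formula after the same cancellation. There is essentially no real obstacle here; the only care required is the bookkeeping of the additive constants so that they collapse to the single $(2^{1-q}-1)$ term, and — if one works from (\ref{tsallis_for_2_qubits}) rather than the precomputed eigenvalues — keeping the factor of $3$ on the $\left(\frac{p}{3}\right)^q$ term straight, avoiding the stray exponent typo that appears in the Werner proof above.
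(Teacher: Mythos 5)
Your proposal is correct and follows the same route as the paper: quote $c=\left|\tfrac{4p}{3}-1\right|$ and the eigenvalues $1-p$, $\tfrac{p}{3}$ (multiplicity three) from the Werner computation, form the Tsallis entropy and the maximization term, and substitute into Corollary \ref{tsallis_based_discord_in_gen}, with the additive constants collapsing to the single $2^{1-q}-1$. Your remark about the stray exponent in the paper's Werner-state entropy formula is also well taken.
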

\begin{proof}
	The Tsallis entropy of two qubit Werner state is 
	\begin{equation}
	H^{(T)}_q(\rho) = \frac{1}{1 - q}\left[3\left(\frac{p}{3}\right)^q + (1 - p)^q - 1\right].
	\end{equation}
	Also, the maximization term reduces to 
	\begin{equation}
	\frac{1}{1 - q}\left[\left(\frac{3+|4p-3|}{6}\right)^q + \left(\frac{3-|4p-3|}{6}\right)^q - 1\right].
	\end{equation}
	Hence, the corollary \ref{tsallis_based_discord_in_gen} leads us to the proof. 
\end{proof}
The Tsallis discord of the Werner state $\rho$ with respect to $p$ and $q$ is plotted in the figure \ref{Tsallis_werner}.
\begin{figure}
	\centering
	\includegraphics[scale = .3]{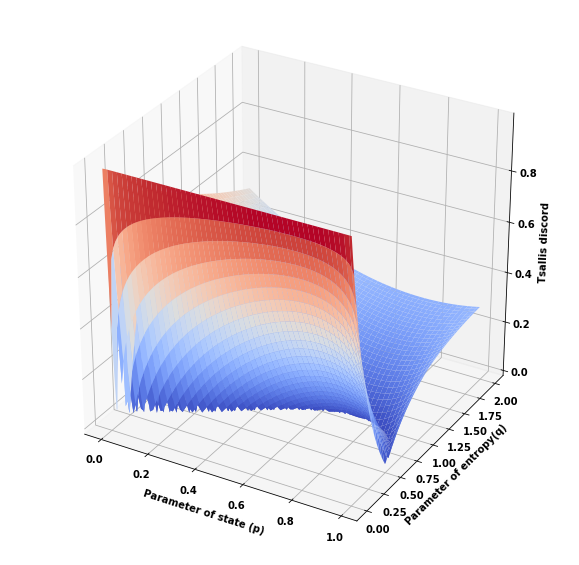}
	\caption{Tsallis discord is plotted for Werner state with respect to the state parameter $p$ and entropy parameter $q$. (Colour online)}
	\label{Tsallis_werner}
\end{figure}

\begin{corollary}
	The von-Neumann quantum discord for $2$-qubit Werner state is given by
	\begin{equation}
	\begin{split}
	\mathcal{D}(\rho) = & 1 - \left(\frac{3+|4p-3|}{6}\right) \log \left(\frac{3+|4p-3|}{6}\right)\\
	& - \left(\frac{3-|4p-3|}{6}\right) \log \left(\frac{3-|4p-3|}{6}\right)\\
	& + \left(1-p\right) \log \left(1-p\right) + p \log \frac{p}{3}.
	\end{split}
	\end{equation}
\end{corollary}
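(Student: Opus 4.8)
The plan is to derive the formula directly from Corollary~\ref{vn_based_discord_in_gen} combined with the parameter data for the Werner state that was already extracted in the proof of the preceding theorem, so that no fresh optimization is required.

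First I would recall that the density matrix in equation~(\ref{werner_density_matrix}) corresponds to $c_1 = c_2 = c_3 = \tfrac{4p}{3}-1$, hence $c = \max\{|c_1|,|c_2|,|c_3|\} = |\tfrac{4p}{3}-1| = \tfrac{|4p-3|}{3}$, and its eigenvalues are $1-p$ (simple) together with $\tfrac{p}{3}$ with multiplicity three. From the eigenvalues, the von-Neumann entropy is $H^{(S)}(\rho) = -3\cdot\tfrac{p}{3}\log\tfrac{p}{3} - (1-p)\log(1-p) = -p\log\tfrac{p}{3} - (1-p)\log(1-p)$; after the minus sign in Corollary~\ref{vn_based_discord_in_gen} is applied, this already supplies the last two summands $p\log\tfrac{p}{3} + (1-p)\log(1-p)$ of the claimed expression.

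Next I would treat the measurement (maximization) contribution. Substituting $c = \tfrac{|4p-3|}{3}$ gives $\tfrac{1+c}{2} = \tfrac{3+|4p-3|}{6}$ and $\tfrac{1-c}{2} = \tfrac{3-|4p-3|}{6}$, two numbers that sum to $1$. The cleanest route is to note the elementary identity (with base-$2$ logarithms) $1 - \tfrac{1+c}{2}\log(1+c) - \tfrac{1-c}{2}\log(1-c) = -\tfrac{1+c}{2}\log\tfrac{1+c}{2} - \tfrac{1-c}{2}\log\tfrac{1-c}{2}$, since the two $\log 2 = 1$ shifts contribute $\log 2\,\bigl(\tfrac{1+c}{2}+\tfrac{1-c}{2}\bigr) = 1$; this is exactly the rewriting used inside the proof of Corollary~\ref{vn_based_discord_in_gen}. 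Evaluating the right-hand side at the Werner value of $c$ yields $-\tfrac{3+|4p-3|}{6}\log\tfrac{3+|4p-3|}{6} - \tfrac{3-|4p-3|}{6}\log\tfrac{3-|4p-3|}{6}$. Since $H^{(S)}(\rho_b) = H^{(S)}(I/2) = 1$, plugging these pieces into $\mathcal{D}(\rho) = H^{(S)}(\rho_b) + \max_\theta(\,\cdots\,) - H^{(S)}(\rho)$ gives precisely the stated identity.

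There is no genuine obstacle here; the only care needed is the bookkeeping of the additive constants — recombining $\log(1\pm c)$ into $\log\tfrac{1\pm c}{2}$ so the extra "$2$" in Corollary~\ref{vn_based_discord_in_gen} correctly collapses to the "$1$" in the final formula — and observing that the absolute value $|4p-3|$ is carried through unchanged, so no case distinction on the sign of $4p-3$ is ever needed.
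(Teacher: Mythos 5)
Your proposal is correct and follows essentially the same route as the paper: both derive the result by substituting the Werner values $c=|4p-3|/3$ and the eigenvalues $1-p,\ \tfrac{p}{3},\ \tfrac{p}{3},\ \tfrac{p}{3}$ into Corollary~\ref{vn_based_discord_in_gen}, with the maximization term reducing to $-\tfrac{3+|4p-3|}{6}\log\tfrac{3+|4p-3|}{6}-\tfrac{3-|4p-3|}{6}\log\tfrac{3-|4p-3|}{6}$. The only (inessential) difference is that you compute $H(\rho)$ directly from the eigenvalue formula for the von-Neumann entropy, whereas the paper obtains it as the $(q,r)\to(1,1)$ limit of the Sharma--Mittal entropy via {\'L}H{\^o}pital's rule, in keeping with its theme of presenting everything as a limiting case.
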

\begin{proof}
	The von-Neumann entropy of two qubit Werner state is 
	\begin{equation}
	\begin{split}
	H(\rho) & = \lim_{(q, r) \rightarrow (1, 1)} \frac{1}{1-r} \left[\left[\left(1 - p \right)^q + 3 \left( \frac{p}{3} \right)^q \right]^{\frac{1 - r}{1 - q}} - 1\right] \\
	& = \lim_{r \rightarrow 1}\lim_{q \rightarrow r} \frac{1}{1-r} \left[\left[\left(1 - p\right)^q + 3 \left(\frac{p}{3}\right)^q\right]^{\frac{1 - r}{1 - q}} -1\right] \\
	&=\lim_{r \rightarrow 1}\frac{1}{1-r}\left[\left[\left(1-p\right)^r+3\left(\frac{p}{3}\right)^r\right]-1\right].
	\end{split}
	\end{equation}
	Using the {\'L}H{\^o}spital's rule we get
	\begin{equation}
	\begin{split}
	H(\rho) & = \lim_{r \rightarrow 1}\left[ - \left(1-p\right)^r \log \left(1-p\right) - 3 \left(\frac{p}{3}\right)^r\log\left(\frac{p}{3}\right) \right] \\
	& = - \left(1-p\right) \log \left(1-p\right) - p\log\left(\frac{p}{3}\right).
	\end{split}
	\end{equation}
	In addition, the maximization term reduces to 
	\begin{equation}
	\begin{split}
	- \left(\frac{3+|4p-3|}{6}\right) \log \left(\frac{3+|4p-3|}{6}\right)- \left(\frac{3-|4p-3|}{6}\right) \log \left(\frac{3-|4p-3|}{6}\right).
	\end{split}
	\end{equation}
	Hence, the corollary \ref{vn_based_discord_in_gen} leads us to the proof.
\end{proof}

The negativity of entanglement is a well known measure of entanglement. It is the negated sum of negative eigenvalues of the partial transpose of density matrix. The partial transpose of $\rho$ with respect to the subsystem $\mathcal{H}^{(B)}$ is
\begin{equation}\label{parwerner}
\rho^\tau = \begin{bmatrix} \frac{p}{3} & 0 & 0 & \frac{2p}{3}-\frac{1}{2} \\ 0 & -\frac{p}{3} + \frac{1}{2} & 0 & 0 \\ 0 & 0 & -\frac{p}{3}+\frac{1}{2} & 0 \\ \frac{2p}{3}-\frac{1}{2} & 0 & 0 & \frac{p}{3} \end{bmatrix}.
\end{equation}
Eigenvalues of this matrix are $(p - \frac{1}{2})$, and $(\frac{1}{2} - \frac{p}{3})$ with multiplicity three. The first eigenvalue if negative when $p \leq \frac{1}{2}$. The second one is everywhere positive in $0 \leq p \leq 1$. Therefore, entanglement of $\rho$ is
\begin{equation}
N(\rho) = \frac{1}{2} \left[\left|p - \frac{1}{2}\right| - \left(p - \frac{1}{2}\right)\right].
\end{equation}
In the figure \ref{togther_werner}, we plot negativity, Sharma-Mittal, Renyi, Tsallis and Von Neumann discord for Werner state. Here, we can recognize different behavior of quantum correlations with the state parameter $p$. The quantum entanglement is zero for $ p \geq .5$, were all the discords has non-zero value. The R{\'e}nyi entropy, marked by a bulleted ($\bullet$) line takes higher value in comparison to the other discords. The Sharma-Mittal and Tsallis discords are also non-negative when $p \geq .5$ but they do not follow the entanglement like the von-Neumann discord. It indicates applicability of these quantum discords in different tasks of quantum information with zero discord quantum states. Note that, $\mathcal{D}_{.5, .4}(\rho)$ and  $\mathcal{D}^{(T)}_{.5}(\rho)$ are negative for some states.
\begin{figure}
	\centering
	\includegraphics[scale = .3]{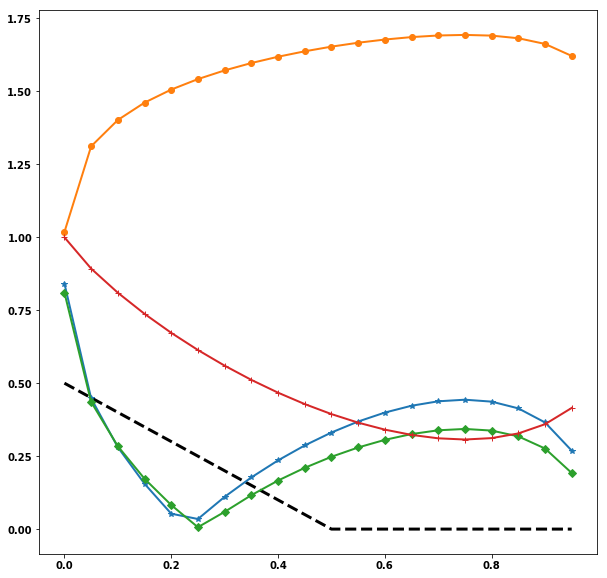}
	\caption{The broken line represents negativity of Werner state. The Sharma-Mittal discord, R{\'e}nyi,  Tsallis and Von Neumann discord are represented by lines with markers $\star$, $\bullet$,  $\blacklozenge$ and $+$ respectively. In all the plots $q = .5, r = .4$. (Colour online)}
	\label{togther_werner}
\end{figure}

\subsection{Isotropic State}

An isotropic state is a bipartite quantum state which is given by
\begin{equation}
\rho=\frac{d^2}{d^2-1}\left[(1-F)\frac{I}{d^2} + \left( F - \frac{1}{d^2}\right) \ket{\phi^+}\bra{\phi^+}\right],
\end{equation}	where $0 \leq F \leq 1$ and $\ket{\phi^+} = \frac{1}{\sqrt{d}}\sum_j \ket{j}\otimes \ket{j}$, a maximally entangled state in $\mathcal{H}^{(d)} \otimes \mathcal{H}^{(d)}$. In general, a two-qubit isotropic state is 
\begin{equation}
\rho = \begin{bmatrix}\label{denisotropic}
\frac{F}{3}+\frac{1}{6} & 0 & 0 & \frac{2}{3}F-\frac{1}{6} \\ 0 & \frac{1}{3}-\frac{F}{3} & 0 & 0 \\ 0 & 0 & \frac{1}{3}-\frac{F}{3} & 0 \\ \frac{2}{3}F-\frac{1}{6} & 0 & 0 & \frac{F}{3}+\frac{1}{6} \end{bmatrix}.
\end{equation}
The density matrices mentioned in equation(\ref{parwerner}) and equation(\ref{denisotropic}) are centro-symmetric. Entanglement of these quantum states is investigated in \cite{fel2012quantum}. Comparing equation(\ref{denisotropic}) with equation (\ref{density_matrix}) we get,
\begin{equation}
\begin{split}
& 1+c_{3} = 4\left(\frac{F}{3}+\frac{1}{6}\right); -c_{2}+c_{1}=4\left(\frac{2F}{3}-\frac{1}{6}\right);\\
& c_{1}+c_{2}=0; 1-c_{3}=4\left(\frac{1}{3}-\frac{F}{3}\right),
\end{split}
\end{equation}
which indicate
\begin{equation}
c_{1} = \frac{4F}{3}-\frac{1}{3}; c_{2} = -\frac{4F}{3} + \frac{1}{3}; c_{3} = \frac{4F}{3} - \frac{1}{3}.
\end{equation}
Hence, $c = \max\{|c_1|, |c_2|, |c_3|\} = |\frac{4F}{3} - \frac{1}{3}|$.

\begin{theorem}
	The Sharma-Mittal quantum discord for two qubit isotropic states $\rho$ is 
	\begin{equation}
	\begin{split}
	\mathcal{D}_{q,r}(\rho) = \frac{1}{1 - r}\left[\left\{ \left(\frac{3 + |4F - 1|}{6}\right)^q + \left(\frac{3 - |4F - 1|}{6}\right)^q \right\}^{\frac{1 - r}{1 - q}}\right] & \\
	+ \frac{2^{1 - r} - 1}{1 - r} - \frac{1}{1 - r}\left[\left\{F^q + 3\left(\frac{1 - F}{3}\right)^q \right\}^{\frac{1 - r}{1 - q}}\right]&.
	\end{split}
	\end{equation} 
\end{theorem}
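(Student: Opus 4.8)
The plan is to apply Theorem \ref{sm_based_discord_in_gen} directly, since the isotropic state $\rho$ of equation (\ref{denisotropic}) has already been brought into the form (\ref{density_matrix}) with $c_1 = \frac{4F}{3} - \frac{1}{3}$, $c_2 = -\frac{4F}{3} + \frac{1}{3}$, and $c_3 = \frac{4F}{3} - \frac{1}{3}$, so that $c = \max\{|c_1|,|c_2|,|c_3|\} = \frac{|4F-1|}{3}$. The only quantities left to determine are the eigenvalues of $\rho$, needed to evaluate $H_{q,r}(\rho)$, and the algebraic form of the maximization term.

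First I would substitute these $c_i$ into the eigenvalue formulas (\ref{eigenvalues_in_gen}). A short computation gives $\lambda_0 = \lambda_1 = \lambda_3 = \frac{1-F}{3}$ and $\lambda_2 = F$; that is, $\rho$ has spectrum $\{F, \tfrac{1-F}{3}, \tfrac{1-F}{3}, \tfrac{1-F}{3}\}$. Inserting this into the definition of the Sharma-Mittal entropy yields
\[
H_{q,r}(\rho) = \frac{1}{1-r}\left[\left\{F^q + 3\left(\frac{1-F}{3}\right)^q\right\}^{\frac{1-r}{1-q}} - 1\right].
\]
Since the reduced states are $\rho_a = \rho_b = \frac{I_2}{2}$ exactly as in the proof of Theorem \ref{sm_based_discord_in_gen}, equation (\ref{entropy_i_by_2}) gives $H_{q,r}(\rho_b) = \frac{2^{1-r}-1}{1-r}$.

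Next I would evaluate the optimization term by putting $c = \frac{|4F-1|}{3}$ into equation (\ref{minimization}); the elementary identities $\frac{1+c}{2} = \frac{3+|4F-1|}{6}$ and $\frac{1-c}{2} = \frac{3-|4F-1|}{6}$ turn it into
\[
\max_\theta H_{q,r}\!\left(\rho^{(0)}\right) = \frac{1}{1-r}\left[\left\{\left(\frac{3+|4F-1|}{6}\right)^q + \left(\frac{3-|4F-1|}{6}\right)^q\right\}^{\frac{1-r}{1-q}} - 1\right].
\]
Adding $H_{q,r}(\rho_b)$ and this term and then subtracting $H_{q,r}(\rho)$, as prescribed by Definition \ref{sm_discord_deff}, the two isolated constants $-\frac{1}{1-r}$ (from the maximization term) and $+\frac{1}{1-r}$ (from $-H_{q,r}(\rho)$) cancel, leaving precisely the claimed expression.

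There is essentially no obstacle: the result is a specialization of Theorem \ref{sm_based_discord_in_gen}, and the only points requiring care are keeping the absolute value $|4F-1|$ intact throughout (so no case split into $F \geq \frac14$ and $F < \frac14$ is needed) and checking that the $\mp\frac{1}{1-r}$ terms cancel in the final sum. One may additionally note that taking the limits $r \to 1$, $r \to q$, and $(q,r)\to(1,1)$ in this expression reproduces the R{\'e}nyi, Tsallis, and von-Neumann discords of the isotropic state, following verbatim the arguments of Corollaries \ref{renyi_based_discord_in_gen}--\ref{vn_based_discord_in_gen}.
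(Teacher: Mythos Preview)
Your proposal is correct and follows essentially the same route as the paper: compute the spectrum of $\rho$ (the paper simply states the eigenvalues $F$ and $\tfrac{1-F}{3}$ with multiplicity $3$, while you derive them from (\ref{eigenvalues_in_gen})), write down $H_{q,r}(\rho)$, substitute $c=\tfrac{|4F-1|}{3}$ into the maximization term (\ref{minimization}), and assemble everything via Theorem~\ref{sm_based_discord_in_gen}. Your explicit remark about the cancellation of the $\mp\tfrac{1}{1-r}$ constants is a helpful clarification that the paper leaves implicit.
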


\begin{proof}
	Note that, the eigenvalue of $\rho$ are $F$, and $\frac{(1 - F)}{3}$ with multiplicity $3$. Hence, the Sharma-Mittal quantum entropy of $\rho$ is 
	\begin{equation}
	H_{q,r}(\rho) = \frac{1}{1 - r}\left[\left\{F^q + 3\left(\frac{1 - F}{3}\right)^q \right\}^{\frac{1 - r}{1 - q}} - 1\right].
	\end{equation}
	Considering $c = |\frac{4F}{3} - \frac{1}{3}|$ the maximization terms gets the form
	\begin{equation}
	\begin{split} 
	& \max_{\theta}(H_{q,r}(\rho^{(0)})) = \max_{\theta}(H_{q,r}(\rho^{(1)})) \\
	& = \frac{1}{1 - r}\left[\left\{ \left(\frac{3 + |4F - 1|}{6}\right)^q + \left(\frac{3 - |4F - 1|}{6}\right)^q \right\}^{\frac{1 - r}{1 - q}} - 1\right].
	\end{split} 
	\end{equation}
	Now, theorem \ref{sm_based_discord_in_gen} leads up to the result.
\end{proof}

The Sharma-Mittal discord for isotropic state with respect to the state parameter $F$ and entropy parameter $q$ is plotted in the figure \ref{Sharma_isotropic_par_q}. Also, the Sharma-Mittal discord for isotropic state with respect to $F$ and $r$ is plotted in the figure \ref{Sharma_isotropic_par_r}.
\begin{figure}
	\centering
	\includegraphics[scale = .3]{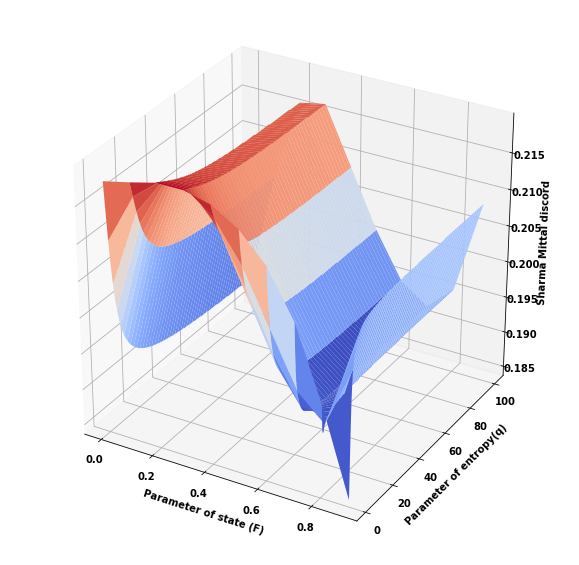}
	\caption{Sharma Mittal discord is plotted for isotropic state with respect to the state parameter $F$ and entropy parameter $q$ keeping $r=5$. (Colour online)}
	\label{Sharma_isotropic_par_q}
\end{figure}
\begin{figure}
	\centering
	\includegraphics[scale = .3]{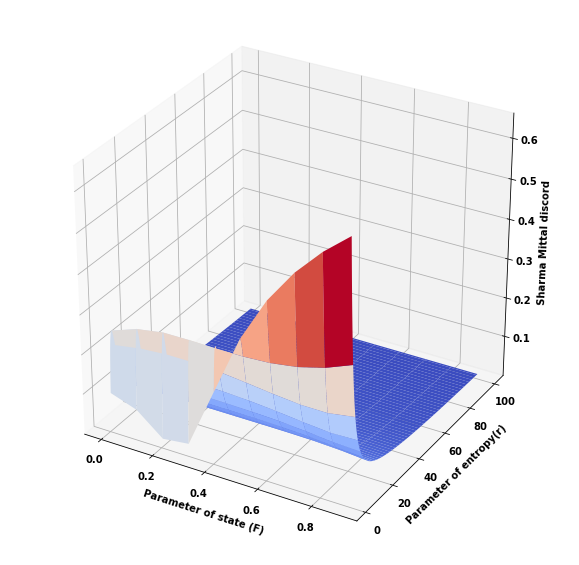}
	\caption{Sharma Mittal discord is plotted for isotropic state with respect to the state parameter $F$ and entropy parameter $r$ keeping $q=5$. (Colour online)}
	\label{Sharma_isotropic_par_r}
\end{figure}

\begin{corollary}
	The R{\'e}yi quantum discord for $2$-qubit isotropic state is  
	\begin{equation*}
	\begin{split}
	\mathcal{D}^{(R)}(\rho) =  1 - \frac{1}{1 - q} \left[\log\left\{ \left(\frac{3+|4F-1|}{6}\right)^q + \left(\frac{3-|4F-1|}{6}\right)^q\right\}\right]&\\
	+\frac{1}{1 - q} \left[ \log \left\{3\left(\frac{1-F}{3}\right)^q + F^q\right\}\right]&.
	\end{split}
	\end{equation*}
\end{corollary}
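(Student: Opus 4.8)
The plan is to specialize Corollary~\ref{renyi_based_discord_in_gen}, which already gives the R\'enyi discord of an arbitrary two--qubit state of the form (\ref{density_matrix}) purely in terms of $c = \max\{|c_1|,|c_2|,|c_3|\}$ and the R\'enyi entropy $H^{(R)}_q(\rho)$. All the state--specific data has in fact already been assembled just above the statement: the identifications $c_1 = \frac{4F}{3}-\frac13$, $c_2 = -\frac{4F}{3}+\frac13$, $c_3 = \frac{4F}{3}-\frac13$, hence $c = \big|\frac{4F}{3}-\frac13\big| = \frac{|4F-1|}{3}$, together with the eigenvalue list of $\rho$, namely $F$ once and $\frac{1-F}{3}$ with multiplicity three, recorded in the proof of the preceding theorem on the Sharma--Mittal discord of the isotropic state. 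So the corollary is just a substitution into an established closed form.

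First I would write down the R\'enyi entropy of $\rho$ from (\ref{renyi_for_2_qubits}) using these eigenvalues, namely $H^{(R)}_q(\rho) = \frac{1}{1-q}\log\!\left[F^q + 3\left(\frac{1-F}{3}\right)^q\right]$. Next I would evaluate the maximization term of Corollary~\ref{renyi_based_discord_in_gen}: inserting $c = \frac{|4F-1|}{3}$ gives $\frac{1+c}{2} = \frac{3+|4F-1|}{6}$ and $\frac{1-c}{2} = \frac{3-|4F-1|}{6}$, so that
\[
\frac{1}{1-q}\log\!\left[\left(\frac{1+c}{2}\right)^q + \left(\frac{1-c}{2}\right)^q\right] = \frac{1}{1-q}\log\!\left[\left(\frac{3+|4F-1|}{6}\right)^q + \left(\frac{3-|4F-1|}{6}\right)^q\right].
\]
Feeding both expressions into the formula of Corollary~\ref{renyi_based_discord_in_gen} and collecting the two $\frac{1}{1-q}\log(\cdot)$ contributions with the constant $1$ produces exactly the stated identity. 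One may equivalently obtain it by taking $r\to1$ in the Sharma--Mittal discord of the isotropic state proved just above, using $\lim_{r\to1}\frac{2^{1-r}-1}{1-r} = 1$ as in the proof of Corollary~\ref{vn_based_discord_in_gen}; the two routes agree.

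There is no genuine obstacle: the content is a direct specialization, mirroring the corresponding corollary for the Werner state almost verbatim. The only points needing a little care are the elementary algebra $\frac{1\pm c}{2} = \frac{3\pm|4F-1|}{6}$, which relies on having the correct value of $c$, and bookkeeping the sign conventions of the R\'enyi entropy so that the maximization term and $-H^{(R)}_q(\rho)$ are combined with the signs appearing in the statement.
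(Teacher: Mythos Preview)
Your proposal is correct and follows essentially the same approach as the paper: compute $H^{(R)}_q(\rho)$ from the eigenvalues $F$ and $\tfrac{1-F}{3}$ (with multiplicity three), plug $c=\tfrac{|4F-1|}{3}$ into the maximization term, and invoke Corollary~\ref{renyi_based_discord_in_gen}. The paper's proof is exactly this substitution, so your write-up and the alternative $r\to 1$ limit you mention both recover the stated formula.
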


\begin{proof}
	The R{\'e}yi entropy is given by
	\begin{equation}
	H^{(R)}_q(\rho) = \frac{-1}{1-q} \log\left[F^q+3\left(\frac{1-F}{3}\right)^q\right],
	\end{equation} 
	and the maximization term reduces to 
	\begin{equation}
	\frac{-1}{1 - q}\log \left[\left\{ \left(\frac{3+|4F-1|}{6}\right)^q + \left(\frac{3-|4F-1|}{6}\right)^q\right\}\right].
	\end{equation} 
	Hence, the result follows from corollary \ref{renyi_based_discord_in_gen}.
\end{proof}

The R{\'e}nyi discord is for isotropic state plotted in the figure \ref{Renyi_isotropic}.
\begin{figure}
	\centering
	\includegraphics[scale = .3]{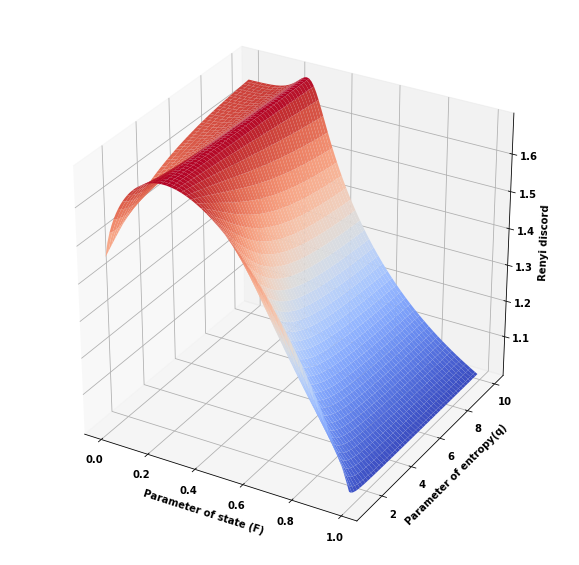}
	\caption{R{\'e}nyi discord is plotted for isotropic state with respect to the state parameter $p$ and entropy parameter $q$. (Colour online)}
	\label{Renyi_isotropic}
\end{figure}

\begin{corollary}
	The Tsallis quantum discord for two qubit isotropic state is  
	\begin{equation*}
	\begin{split}
	\mathcal{D}^{(T)}(\rho) = \frac{1}{1-q}[2^{1-q} - 1 + \left( \frac{3 + |4F - 1|}{6} \right)^q + \left( \frac{3 - |4F - 1|}{6} \right)^q& \\
	- F^q - 3\left(\frac{1 - F}{3}\right)^q].&
	\end{split}
	\end{equation*}
\end{corollary}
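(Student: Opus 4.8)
The plan is to invoke Corollary \ref{tsallis_based_discord_in_gen}, which already expresses the Tsallis discord of any two qubit state with maximally mixed marginals in terms of $c = \max\{|c_1|,|c_2|,|c_3|\}$ and the Tsallis entropy $H^{(T)}_q(\rho)$. Since we have already matched the isotropic state (\ref{denisotropic}) with the canonical form (\ref{density_matrix}) and found $c = |\tfrac{4F}{3}-\tfrac{1}{3}| = \tfrac{|4F-1|}{3}$, the only two ingredients left to assemble are $H^{(T)}_q(\rho)$ and the maximization term, both read off from eigenvalue data.

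First I would compute the Tsallis entropy. The eigenvalues of $\rho$ are $F$ with multiplicity one and $\tfrac{1-F}{3}$ with multiplicity three, so equation (\ref{Tsallis_entropy}) gives
\begin{equation*}
H^{(T)}_q(\rho) = \frac{1}{1-q}\left[F^q + 3\left(\frac{1-F}{3}\right)^q - 1\right].
\end{equation*}
Next I would substitute $c = \tfrac{|4F-1|}{3}$ into the maximization term of Corollary \ref{tsallis_based_discord_in_gen}. Using $\tfrac{1+c}{2} = \tfrac{3 + |4F-1|}{6}$ and $\tfrac{1-c}{2} = \tfrac{3 - |4F-1|}{6}$, the term
\begin{equation*}
\frac{1}{1-q}\left[\left\{\frac{(1+c)^q}{2^q} + \frac{(1-c)^q}{2^q}\right\} - 1\right]
\end{equation*}
becomes
\begin{equation*}
\frac{1}{1-q}\left[\left(\frac{3+|4F-1|}{6}\right)^q + \left(\frac{3-|4F-1|}{6}\right)^q - 1\right].
\end{equation*}

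Finally I would add this maximization term to $\tfrac{1}{1-q}(2^{1-q}-1)$ and subtract $H^{(T)}_q(\rho)$, as prescribed by Corollary \ref{tsallis_based_discord_in_gen}. The three $\pm 1$ constants combine to a single $-1$, and the remaining terms line up exactly as in the statement. The only point requiring care is the bookkeeping of these constant terms and the overall factor $\tfrac{1}{1-q}$; there is no genuine analytic obstacle here, since all the hard work (the optimization over von-Neumann measurements) was already done in the proof of Theorem \ref{sm_based_discord_in_gen} and inherited through Corollary \ref{tsallis_based_discord_in_gen}.
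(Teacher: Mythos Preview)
Your proposal is correct and follows essentially the same route as the paper's own proof: compute $H^{(T)}_q(\rho)$ from the eigenvalues $F$ and $\tfrac{1-F}{3}$ (with multiplicity three), plug $c=\tfrac{|4F-1|}{3}$ into the maximization term, and invoke Corollary~\ref{tsallis_based_discord_in_gen}. Your version is in fact slightly more explicit about the substitution $\tfrac{1\pm c}{2}=\tfrac{3\pm|4F-1|}{6}$ and the bookkeeping of the constant terms, but the argument is the same.
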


\begin{proof}
	The Tsallis entropy is given by
	\begin{equation}
	H^{(T)}_q(\rho) = \frac{1}{1-q}\left[F^q+3\left(\frac{1-F}{3}\right)^q-1\right],
	\end{equation} 
	and the maximization term reduces to 
	\begin{equation}
	\frac{1}{1-q}\left[\left(\frac{3+|4F-1|}{6}\right)^q+\left(\frac{3-|4F-1|}{6}\right)^q-1\right].
	\end{equation}
	Hence, the result follows from Corollary \ref{tsallis_based_discord_in_gen} .
\end{proof}
The Tsallis discord is for isotropic state plotted in the figure \ref{Tsallis_isotropic}.
\begin{figure}
	\centering
	\includegraphics[scale = .3]{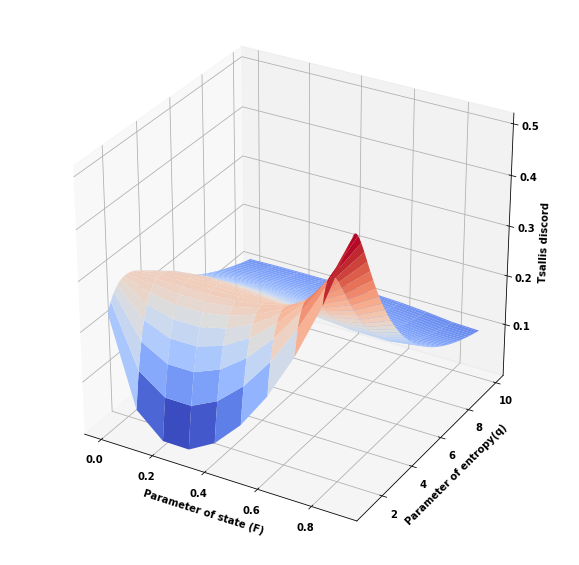}
	\caption{Tsallis discord for isotropic state is plotted with respect to the state parameter $p$ and entropy parameter $q$. (Colour online)}
	\label{Tsallis_isotropic}
\end{figure}

\begin{corollary}
	The von-Neumann quantum discord for two qubit isotropic state, given by a density matrix $\rho$, is
	\begin{equation*} 
	\begin{split}
	\mathcal{D}(\rho) = 1 - \left[\frac{3-|4F-1|}{6}\right] \log \left[\frac{3-|4F-1|}{6}\right]& \\
	-\left[\frac{3+|4F-1|}{6}\right] \log \left[\frac{3+|4F-1|}{6}\right] & \\
	+ F\log F +\left(1-F\right)\log\left(\frac{1-F}{3}\right). &
	\end{split} 
	\end{equation*}
\end{corollary}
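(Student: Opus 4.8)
The plan is to specialize Corollary \ref{vn_based_discord_in_gen}, exactly as the Werner-state corollary above does. Since the isotropic state has maximally mixed marginals $\rho_a=\rho_b=\frac{I}{2}$, we already know $H(\rho_b)=H\!\left(\frac{I}{2}\right)=1$ (the L'H\^opital computation from Corollary \ref{vn_based_discord_in_gen} applies verbatim). So it remains to identify the two remaining ingredients: the von-Neumann entropy $H(\rho)$ and the $(q,r)\to(1,1)$ limit of the maximization term $\max_\theta H_{q,r}(\rho^{(0)})$.

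First I would record the spectrum of $\rho$: from the density matrix (\ref{denisotropic}) the eigenvalues are $F$ (once) and $\frac{1-F}{3}$ (with multiplicity three), as already noted in the Sharma-Mittal theorem for isotropic states. I would then obtain $H(\rho)$ by the same limiting procedure used in the Werner case, writing $H(\rho)=\lim_{r\to1}\lim_{q\to r}\frac{1}{1-r}\!\left[\left(F^q+3\left(\frac{1-F}{3}\right)^q\right)^{\frac{1-r}{1-q}}-1\right]$, which collapses to $\lim_{r\to1}\frac{1}{1-r}\!\left[F^r+3\left(\frac{1-F}{3}\right)^r-1\right]$, and then applying L'H\^opital in $r$ to get $H(\rho)=-F\log F-(1-F)\log\frac{1-F}{3}$. (Equivalently one may just quote the direct formula $H(\rho)=-\sum_i\lambda_i\log\lambda_i$.)

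Next I would handle the maximization term. We have already computed $c=\max\{|c_1|,|c_2|,|c_3|\}=\left|\frac{4F}{3}-\frac13\right|=\frac{|4F-1|}{3}$, so $\frac{1+c}{2}=\frac{3+|4F-1|}{6}$ and $\frac{1-c}{2}=\frac{3-|4F-1|}{6}$. Substituting this $c$ into the $(q,r)\to(1,1)$ limit of (\ref{minimization}), computed in Corollary \ref{vn_based_discord_in_gen}, gives
$$\lim_{(q,r)\to(1,1)}\max_\theta H_{q,r}(\rho^{(0)})=-\left(\frac{3+|4F-1|}{6}\right)\log\left(\frac{3+|4F-1|}{6}\right)-\left(\frac{3-|4F-1|}{6}\right)\log\left(\frac{3-|4F-1|}{6}\right).$$
Finally, assembling $\mathcal{D}(\rho)=H(\rho_b)+\lim\max_\theta H_{q,r}(\rho^{(0)})-H(\rho)$ with $H(\rho_b)=1$ and the two expressions above yields the claimed formula.

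I do not expect a genuine obstacle here: everything reduces to careful bookkeeping. The only mildly delicate points are the L'H\^opital limit producing $H(\rho)$ (handled identically to the Werner case) and the algebraic simplification turning $\left(\frac{1\pm c}{2}\right)$ into $\frac{3\pm|4F-1|}{6}$ while keeping the absolute value intact; both are routine.
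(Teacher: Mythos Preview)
Your proposal is correct and follows essentially the same route as the paper: identify $H(\rho)=-F\log F-(1-F)\log\frac{1-F}{3}$, compute the maximization term by plugging $c=\frac{|4F-1|}{3}$ into the $(q,r)\to(1,1)$ limit from Corollary \ref{vn_based_discord_in_gen}, and assemble. The paper's proof is slightly terser (it just states $H(\rho)$ directly rather than deriving it via L'H\^opital as you do in imitation of the Werner case), but the structure is the same.
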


\begin{proof}
	The von-Neumann entropy is given by 
	$$ -F\log F -\left(1-F\right)\log\left(\frac{1-F}{3}\right), $$  and the maximization term reduces to
	\begin{equation}
	\begin{split}
	-\left[\frac{3-|4F-1|}{6}\right] \log \left[\frac{3-|4F-1|}{6}\right] -\left[\frac{3+|4F-1|}{6}\right] \log \left[\frac{3+|4F-1|}{6}\right].
	\end{split}
	\end{equation}  
	Hence, the results follows from Corollary \ref{vn_based_discord_in_gen}.
\end{proof}

The eigenvalues of the partial transpose of the density matrix $\rho$ are $(\frac{1}{2} - F)$, and $(\frac{1}{6} + \frac{F}{3})$ with multiplicity $3$. The first eigenvalues is negative when $F > \frac{1}{2}$ and the second one is always positive in the range of $F$. Therefore, the negativity of entanglement for Werner state is 
\begin{equation}
N(\rho) = \frac{1}{2}\left[\left|\frac{1}{2} - F\right| - \left(\frac{1}{2} - F\right)\right].
\end{equation}

In the figure \ref{togther_isotropic}, we plot negativity, Sharma-Mittal, Renyi, Tsallis and Von Neumann discord of isotropic state. Different quantum correlations of the isotropic state also show similar characteristics as the Werner state. These graphs may vary over the entropy parameters $q$ and $r$. 
\begin{figure}
	\centering
	\includegraphics[scale = .3]{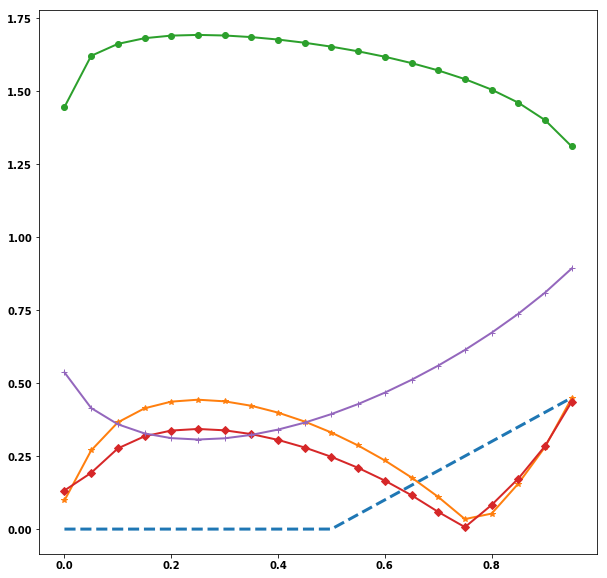}
	\caption{The broken line represents negativity of isotropic state. The Sharma-Mittal discord, R{\'e}nyi, Tsallis and Von Nemann discord are represented by lines with markers $\star$, $\bullet$,  $\blacklozenge$ and $+$, respectively. In all the plots $q = .5, r = .4$. (Colour online)}
	\label{togther_isotropic}
\end{figure}

\subsection{Pointer State}

The pointer state \cite{zurek1981pointer}, which are also called classical quantum state, is well know in the literature of quantum discord as the von-Neumann discord of pointer states are zero. It is proved that the blocks of zero discord quantum states forms a family of commuting normal matrices \cite{huang2011new}. Consider the density matrix mentioned in equation (\ref{density_matrix_expanded}). Partitioning it into four blocks we get $B_{1,1} = \frac{1}{4}\begin{bmatrix}1+c_{3} & 0  \\ 0 & 1-c_{3} \end{bmatrix}, B_{1,2} = \frac{1}{4}\begin{bmatrix}0 & c_{1}-c_{2}  \\ c_{1}+c_{2} & 0 \end{bmatrix}, B_{2, 1} = \frac{1}{4}\begin{bmatrix}0 & c_{1}+c_{2} \\ c_{1}-c_{2} & 0 \end{bmatrix}$, and $B_{2, 2} =\frac{1}{4}\begin{bmatrix} 1-c_{3} & 0  \\ 0 & 1+c_{3} \end{bmatrix}$.
Recall that a matrix $A$ is normal if $A A^\dagger = A^\dagger A$. Also, two matrices $A$ and $B$ are commutative if $AB = BA$. Therefore, all $B_{1,1}, B_{1, 2}, B_{2,1}$, and $B_{2,2}$ are normal, as well as any two of them commute. It generates a system of ten matrix equations, among which four are originated by normality and remainings are originated by commutativity conditions. Expanding these equations, one can observe that all these equations are satisfied if any two of $c_1, c_2$, and $c_3$ are zero.

\begin{theorem}
	The Sharma-Mittal quantum discord for pointer state $\rho$ is given by,
	$$\mathcal{D}(\rho) = \frac{2^{1 - r} - 1}{1 - r} \left[1 - 2^{\frac{-q(1-r)}{1-q}}\left\{\left(1+C\right)^q+\left(1-C\right)^q\right\}^{\frac{1-r}{1-q}}\right],$$
	where $-1\leq C \leq 1$ is the non-zero member of $c_1, c_2$, and $c_3$.
\end{theorem}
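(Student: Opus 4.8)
The plan is to realise the pointer state as a special instance of the canonical two-qubit state in equation (\ref{density_matrix}) and then specialise Theorem \ref{sm_based_discord_in_gen}. As the discussion preceding the statement establishes, the pointer (zero von-Neumann-discord) condition forces any two of $c_1,c_2,c_3$ to vanish; so, after relabelling the Pauli axes if necessary, I would assume $c_2=c_3=0$ and $c_1=C$, with $-1\le C\le 1$ guaranteed by Lemma \ref{values_of_c}. Then $c=\max\{|c_1|,|c_2|,|c_3|\}=|C|$. Since every $C$-dependent quantity that will appear involves $C$ only through the even combination $(1+C)^q+(1-C)^q$, one may freely write $C$ in place of $|C|$; this point should be flagged explicitly but requires no work.

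First I would read the eigenvalues off equation (\ref{eigenvalues_in_gen}): with $c_2=c_3=0$ and $c_1=C$ the four eigenvalues collapse to $\frac{1-C}{4}$ and $\frac{1+C}{4}$, each with multiplicity two. Writing $S := (1+C)^q+(1-C)^q$, this gives $\sum_i \lambda_i^q = 2\left(\frac{1-C}{4}\right)^q + 2\left(\frac{1+C}{4}\right)^q = 2^{1-2q}S$, and hence $H_{q,r}(\rho) = \frac{1}{1-r}\left[\left(2^{1-2q}S\right)^{\frac{1-r}{1-q}} - 1\right]$. Next I would specialise the optimisation term: substituting $c=C$ into equation (\ref{minimization}) yields $\max_\theta\left[\sum_k p_k H_{q,r}(\rho^{(k)})\right] = \frac{1}{1-r}\left[2^{-q\frac{1-r}{1-q}} S^{\frac{1-r}{1-q}} - 1\right]$, while equation (\ref{entropy_i_by_2}) supplies the marginal contribution $H_{q,r}(\rho_b) = H_{q,r}(I/2) = \frac{2^{1-r}-1}{1-r}$.

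The remaining step is to assemble these three ingredients through Theorem \ref{sm_based_discord_in_gen} and simplify. The two stray $-1$ constants in the entropy and maximisation terms cancel, leaving $\mathcal{D}_{q,r}(\rho) = \frac{2^{1-r}-1}{1-r} + \frac{S^{\frac{1-r}{1-q}}}{1-r}\,2^{-q\frac{1-r}{1-q}}\left(1 - 2^{(1-q)\frac{1-r}{1-q}}\right)$. The key identity — and essentially the only nontrivial manipulation — is the exponent simplification $(1-q)\cdot\frac{1-r}{1-q} = 1-r$, so that $2^{(1-q)\frac{1-r}{1-q}} = 2^{1-r}$; this is exactly what turns $\frac{2^{1-r}-1}{1-r}$ into a common factor and collapses the whole expression to the stated closed form $\mathcal{D}_{q,r}(\rho) = \frac{2^{1-r}-1}{1-r}\left[1 - 2^{\frac{-q(1-r)}{1-q}}\left\{(1+C)^q+(1-C)^q\right\}^{\frac{1-r}{1-q}}\right]$.

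I expect no genuine obstacle here: the hypotheses $q>0$, $q\neq1$, $r\neq1$ keep all denominators and exponents well defined, and the computation is a short specialisation of already-proved formulas. The only places demanding care are the bookkeeping of the $4^q = 2^{2q}$ powers when forming $\sum_i\lambda_i^q$, and the justification that the evenness of the final formula in $C$ permits dropping the absolute value inherited from $c = |C|$.
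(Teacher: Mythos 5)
Your proposal is correct and follows essentially the same route as the paper: specialize the eigenvalues of equation (\ref{eigenvalues_in_gen}) to $\frac{1\pm C}{4}$ (each twice), compute $H_{q,r}(\rho)$, substitute $c=|C|$ into the optimisation term (\ref{minimization}), and assemble via Theorem \ref{sm_based_discord_in_gen}. Your write-up is in fact slightly more careful than the paper's, since you make explicit the exponent identity $2^{(1-q)\frac{1-r}{1-q}}=2^{1-r}$ that collapses everything onto the common factor $\frac{2^{1-r}-1}{1-r}$, and you flag the evenness in $C$ that justifies dropping the absolute value.
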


\begin{proof}
	If any two of $c_1, c_2$, and $c_3$ are zeros, the equation (\ref{eigenvalues_in_gen}) suggests that the eigenvalues of $\rho$ are $\frac{(1 - C)}{4}$ with multiplicity $2$, and $\frac{(1 + C)}{4}$ with multiplicity $2$. Therefore, the Sharma-Mittal entropy of $\rho$ is 
	\begin{equation}{\label{Sharma Mittal Pointer}}
	H_{q,r}(\rho) = \frac{1}{1 - r}\left[ 2^{-\frac{(2q - 1)(1 - r)}{(1 - q)}} \left\{ (1 - C)^q + (1 + C)^q \right\}^{\frac{1-r}{1 - q}} - 1\right].
	\end{equation}
	Also, as any two of $c_{1},c_{2},c_{3}$ is $0$ the maximum of them is the norzero term which is equal to $C$. Therefore, the maximization term reduces to $\max_{\theta}H_{q,r}(\rho^{(0)}) = \max_{\theta}H_{q,r}(\rho^{(1)}) = \frac{1}{1-r}\left[\left\{\left(\frac{1+C}{2}\right)^q+\left(\frac{1-C}{2}\right)^q\right\}^{\left(\frac{1-r}{1-q}\right)}-1\right]$. Then we find the result from theorem \ref{sm_based_discord_in_gen}.
\end{proof}

The Sharma-Mittal discord for pointer state with respect to the state parameter $C$ and entropy parameter $q$ is plotted in the figure \ref{Sharma_pointer_par_q}. Also, the Sharma-Mittal discord for pointer state with respect to $C$ and $r$ is plotted in the figure \ref{Sharma_pointer_par_r}. 
\begin{figure}
	\centering
	\includegraphics[scale = .3]{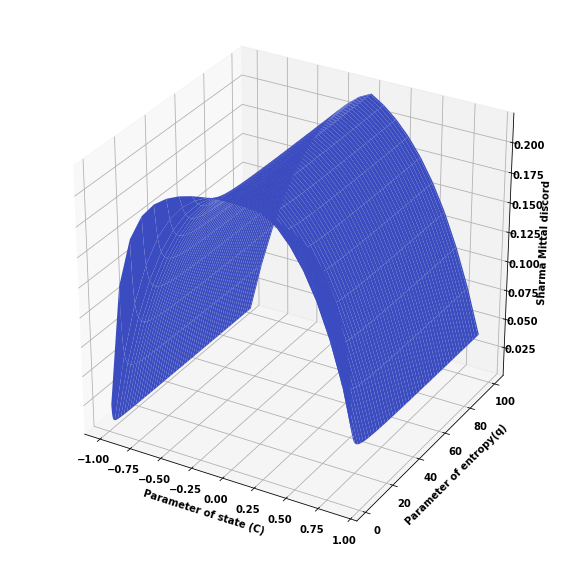}
	\caption{Sharma Mittal discord is plotted for pointer state with respect to the state parameter $C$ and entropy parameter $q$ keeping $r=5$. (Colour online)}
	\label{Sharma_pointer_par_q}
\end{figure}
\begin{figure}
	\centering
	\includegraphics[scale = .3]{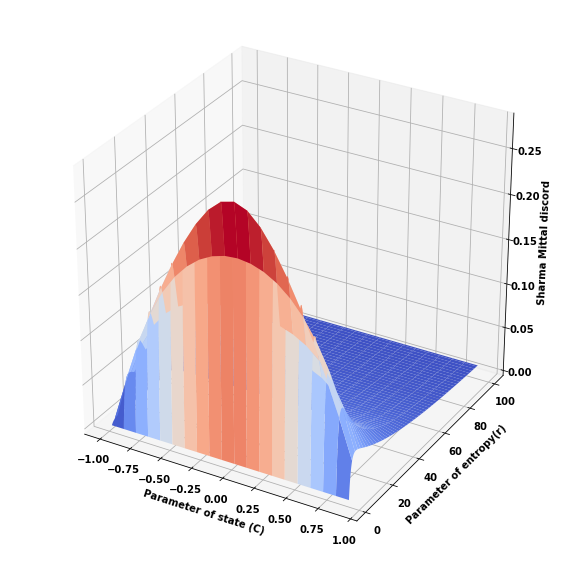}
	\caption{Sharma Mittal discord is plotted for pointer state with respect to the state parameter $C$ and entropy parameter $r$ keeping $q=5$. (Colour online)}
	\label{Sharma_pointer_par_r}
\end{figure}

\begin{corollary}
	The R{\'e}yi discord for two qubit pointer state $\rho$ is
	\begin{equation*}
	\begin{split}
	\mathcal{D}^q(\rho) = 1 - \frac{1}{1 - q} [ \log \left\{ \left( \frac{1 + C}{2} \right)^q + \left(\frac{1 - C}{2}\right)^q \right\}&\\
	-\log\left\{2^{(1 - 2q)} \left\{(1 + C)^q + (1 - C)^q \right\} \right\}]&.
	\end{split}
	\end{equation*}
\end{corollary}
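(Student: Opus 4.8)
The plan is to mirror the proofs of Corollary~\ref{renyi_based_discord_in_gen} and of the corresponding R\'enyi corollaries for the Werner and isotropic states: read off the spectrum of the pointer state, write down its R\'enyi entropy, pass to the $r \to 1$ limit in the maximization term already evaluated for the pointer state, and substitute everything into the general two-qubit R\'enyi discord formula of Corollary~\ref{renyi_based_discord_in_gen}.

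First I would recall from the proof of the preceding theorem that, when two of $c_1,c_2,c_3$ vanish and the remaining one equals $C$, equation~(\ref{eigenvalues_in_gen}) gives $\rho$ the eigenvalues $\frac{1-C}{4}$ and $\frac{1+C}{4}$, each with multiplicity two. Hence $\sum_i \lambda_i^q = 2\left(\frac{1-C}{4}\right)^q + 2\left(\frac{1+C}{4}\right)^q = 2^{1-2q}\left\{(1+C)^q+(1-C)^q\right\}$, and applying equation~(\ref{Renyi_entropy}) to the eigenvalues gives
\[
H^{(R)}_q(\rho) = \frac{1}{1-q}\log\left[2^{1-2q}\left\{(1+C)^q+(1-C)^q\right\}\right].
\]
Next, since the pointer state has $c = \max\{|c_1|,|c_2|,|c_3|\} = |C|$, taking $r \to 1$ in equation~(\ref{minimization}) exactly as in the proof of Corollary~\ref{renyi_based_discord_in_gen} collapses the maximization term to
\[
\max_\theta\left(H^{(R)}_q(\rho^{(0)})\right) = \max_\theta\left(H^{(R)}_q(\rho^{(1)})\right) = \frac{1}{1-q}\log\left[\left(\frac{1+C}{2}\right)^q + \left(\frac{1-C}{2}\right)^q\right],
\]
while $H^{(R)}_q(\rho_b) = H^{(R)}_q\!\left(\frac{I}{2}\right) = 1$ as noted there. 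Substituting these three quantities into Corollary~\ref{renyi_based_discord_in_gen} and grouping the entropy and maximization logarithms under the common prefactor $\frac{1}{1-q}$ yields the stated expression for $\mathcal{D}^q(\rho)$.

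There is no substantive obstacle: the optimization over projective measurements was settled once and for all in Theorem~\ref{sm_based_discord_in_gen} and inherited through Corollary~\ref{renyi_based_discord_in_gen}, so what remains is routine exponent bookkeeping --- folding the factor $2\cdot 4^{-q}$ coming from the doubly degenerate eigenvalues into $2^{1-2q}$ --- together with care over the relative signs when the two logarithmic terms are collected into a single bracketed expression.
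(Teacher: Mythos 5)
Your strategy is the same as the paper's: read off the doubly degenerate eigenvalues $\frac{1\pm C}{4}$, form $\sum_i\lambda_i^q = 2^{1-2q}\{(1+C)^q+(1-C)^q\}$, take the maximization term from equation (\ref{minimization}) with $c=|C|$ in the limit $r\to 1$, and substitute into Corollary \ref{renyi_based_discord_in_gen}. Your two ingredients also carry the signs forced by equation (\ref{Renyi_entropy}), namely a prefactor $+\frac{1}{1-q}$ on both logarithms. The paper's own proof instead writes both the entropy and the maximization term with a prefactor $\frac{-1}{1-q}$, which contradicts its own definition of the R\'enyi entropy but is precisely what makes its substitution land on the printed formula.

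The gap is in your final sentence. Substituting your (correctly signed) quantities into $\mathcal{D}^{(R)}_q(\rho) = 1 + \frac{1}{1-q}\log A - \frac{1}{1-q}\log B$, with $A=\left(\frac{1+C}{2}\right)^q+\left(\frac{1-C}{2}\right)^q$ and $B=2^{1-2q}\{(1+C)^q+(1-C)^q\}$, gives $1+\frac{1}{1-q}\left(\log A - \log B\right)$, which is the stated expression with the sign of the bracketed difference of logarithms reversed --- not the stated expression. The ``routine exponent bookkeeping'' you defer is exactly where this surfaces: $B = 2^{1-q}A$, hence $\log B = \log A + (1-q)$, so your expression collapses to $1 + \frac{q-1}{1-q} = 0$, whereas the corollary as printed collapses to the constant $2$. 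So either you must adopt the paper's sign convention $H^{(R)}_q=\frac{-1}{1-q}\log(\cdot)$ (inconsistent with equation (\ref{Renyi_entropy}) and with the proof of Corollary \ref{renyi_based_discord_in_gen}) in order to reproduce the printed statement, or you should carry the arithmetic to the end and report that the R\'enyi discord of the pointer state vanishes identically. As written, your argument does not establish the formula it claims to establish.
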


\begin{proof}
	The R{\'e}nyi entropy for pointer state $\rho$ is 
	\begin{equation}
	H^{(R)}_q(\rho) = \frac{-1}{1-q}\log\left[2^{(1-2q)}\left\{(1+C)^q+(1-C)^q\right\}\right].
	\end{equation}
	Also, the maximization term reduces to $\frac{-1}{1-q}\log\left[\left(\frac{1+C}{2}\right)^q+\left(\frac{1-C}{2}\right)^q\right]$. Now from corollary \ref{renyi_based_discord_in_gen} we get the result.
\end{proof}

The R{\'e}nyi discord is for pointer state with respect to the parameter $C$ is plotted in the figure \ref{Renyi_pointer}.
\begin{figure}
	\centering
	\includegraphics[scale = .3]{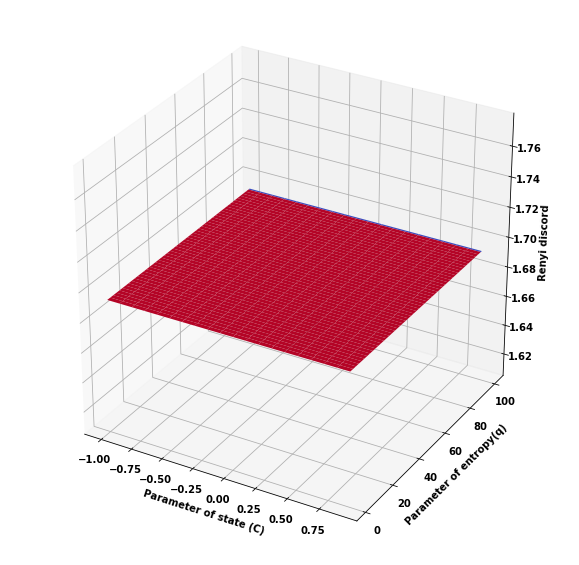}
	\caption{R{\'e}nyi discord is plotted for pointer state with respect to the state parameter $C$ and entropy parameter $q$. (Colour online)}
	\label{Renyi_pointer}
\end{figure}

\begin{corollary}
	The Tsallish quantum discord for two qubit pointer state $\rho$ is 
	$$\mathcal{D}^{T}(\rho) = \frac{2^{1 - q} - 1}{1 - q}\left[1-2^{-q}\left\{\left(1+C\right)^q+\left(1-C\right)^q\right\}\right].$$
\end{corollary}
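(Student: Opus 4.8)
The plan is to read this off directly from Corollary~\ref{tsallis_based_discord_in_gen}, which already expresses the Tsallis discord of an arbitrary two qubit state with maximally mixed marginals purely in terms of $c=\max\{|c_1|,|c_2|,|c_3|\}$ and the Tsallis entropy $H^{(T)}_q(\rho)$. For a pointer state exactly one of $c_1,c_2,c_3$ is nonzero, namely $C$, so the quantity $c$ appearing in that corollary is simply $|C|$; the only genuinely new ingredient is the spectrum of $\rho$, from which $H^{(T)}_q(\rho)$ and the maximization term follow.

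First I would read off the eigenvalues. Setting two of the $c_i$ to zero in equation~(\ref{eigenvalues_in_gen}) gives the spectrum $\{\tfrac{1-C}{4},\tfrac{1-C}{4},\tfrac{1+C}{4},\tfrac{1+C}{4}\}$, exactly as in the proof of the Sharma-Mittal pointer discord established just above. Substituting into the definition of the Tsallis entropy and using $2\cdot 4^{-q}=2^{1-2q}$ yields
\[
H^{(T)}_q(\rho)=\frac{1}{1-q}\left[2^{1-2q}\left\{(1+C)^q+(1-C)^q\right\}-1\right],
\]
which can alternatively be obtained by taking the limit $r\to q$ in the Sharma-Mittal pointer entropy.

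Next I would assemble the three pieces supplied by Corollary~\ref{tsallis_based_discord_in_gen}: the marginal term $H^{(T)}_q(I/2)=\tfrac{1}{1-q}(2^{1-q}-1)$; the maximization term $\tfrac{1}{1-q}\left[2^{-q}\left\{(1+C)^q+(1-C)^q\right\}-1\right]$, obtained by inserting $c=|C|$ into equation~(\ref{minimization}) in its $r\to q$ form; and $H^{(T)}_q(\rho)$ above. Writing $S=(1+C)^q+(1-C)^q$ and adding these with the appropriate signs, the constants $-1$ and $+1$ inside the brackets cancel in pairs, leaving
\[
\mathcal{D}^{(T)}_q(\rho)=\frac{1}{1-q}\left[2^{-q}S-2^{1-2q}S+2^{1-q}-1\right].
\]

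The remaining work is pure bookkeeping: factor $2^{-q}S-2^{1-2q}S=-2^{-q}S(2^{1-q}-1)$, then pull $(2^{1-q}-1)$ out of the whole bracket to reach $\dfrac{2^{1-q}-1}{1-q}\left[1-2^{-q}S\right]$, which is the claimed formula. I do not expect any real obstacle; the only delicate point is keeping the powers $2^{-q}$, $2^{1-q}$, $2^{1-2q}$ straight and spotting the common factor $2^{1-q}-1$ — the same simplification pattern already used in the Sharma-Mittal pointer state theorem.
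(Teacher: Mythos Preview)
Your proposal is correct and follows essentially the same route as the paper: both invoke Corollary~\ref{tsallis_based_discord_in_gen}, read off the doubly-degenerate eigenvalues $\tfrac{1\pm C}{4}$ to compute $H^{(T)}_q(\rho)$, and set $c=|C|$ in the maximization term. The paper simply asserts that the result then follows, whereas you carry out the factorization explicitly; your remark that $c=|C|$ rather than $C$ is more careful than the paper's phrasing, though immaterial since the final expression is even in $C$.
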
	

\begin{proof}
	The Tsallish entropy for two qubit pointer state is 
	\begin{equation}
	H^{(T)}_q(\rho) = \frac{1}{1 - q} \left[2\left(\frac{1 - C}{4}\right)^q + 2\left(\frac{1 + C}{4}\right)^q - 1\right].
	\end{equation}
	As $\max\{c_{1},c_{2},c_{3}\}=C$ the maximization term reduces to $\frac{1}{1 - q}\left[\left(\frac{1+C}{2}\right)^q+\left(\frac{1-C}{2}\right)^q- 1\right]$. Now the result follows from corollary \ref{tsallis_based_discord_in_gen}.
\end{proof}

The Tsallis discord is for pointer state plotted with respect to the parameter $C$, in the figure \ref{Tsallis_pointer}.
\begin{figure}
	\centering
	\includegraphics[scale = .3]{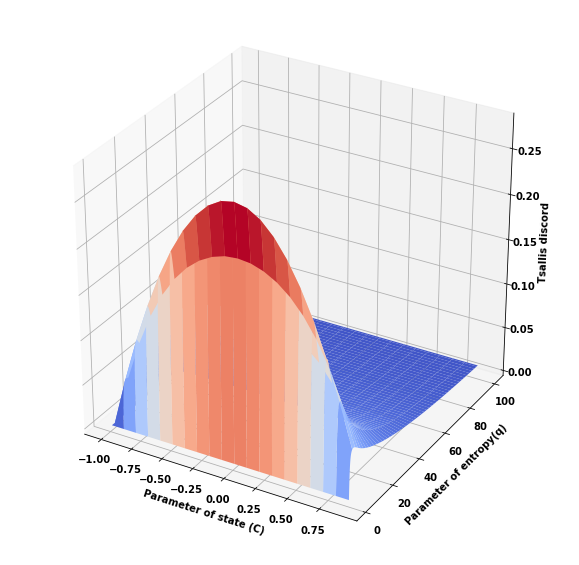}
	\caption{Tsallis discord is plotted for pointer state with respect to the state parameter $C$ and entropy parameter $q$. (Colour online)}
	\label{Tsallis_pointer}
\end{figure}

The next result is well-know in the literature of quantum discord. Here, we calculate the von-Neumann discord as a limit of Sharma-Mittal discord, and it matches with the existing idea.
\begin{corollary}
	The von-Neumann discord for $2$-qubit pointer state is $ 0 $.
\end{corollary}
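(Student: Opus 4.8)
The cleanest route is to invoke Corollary \ref{vn_based_discord_in_gen} rather than re-deriving the von-Neumann discord from scratch. For a pointer state, exactly one of $c_1,c_2,c_3$ is nonzero, call it $C$, so $c=\max\{|c_1|,|c_2|,|c_3|\}=|C|$, and from the proof of the preceding theorem the eigenvalues of $\rho$ are $\tfrac{1-C}{4}$ and $\tfrac{1+C}{4}$, each with multiplicity $2$. Hence the plan is: (i) write down $H(\rho)$ from these eigenvalues; (ii) substitute $c=|C|$ and this $H(\rho)$ into the formula of Corollary \ref{vn_based_discord_in_gen}; (iii) observe everything cancels.

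For step (i), the von-Neumann entropy of the pointer state is
\[
H(\rho)= -2\cdot\frac{1-C}{4}\log\frac{1-C}{4}-2\cdot\frac{1+C}{4}\log\frac{1+C}{4},
\]
and using $\log\frac{1\pm C}{4}=\log(1\pm C)-2$ together with $\tfrac{1-C}{2}+\tfrac{1+C}{2}=1$ this simplifies to
\[
H(\rho)= 2-\frac{1+C}{2}\log(1+C)-\frac{1-C}{2}\log(1-C).
\]
(The expression depends on $C$ only through $|C|$, since swapping the two distinct eigenvalues leaves $H$ unchanged, so it is consistent to plug in $c=|C|$.)

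For steps (ii)–(iii), Corollary \ref{vn_based_discord_in_gen} gives $\mathcal{D}(\rho)=2-\frac{1+c}{2}\log(1+c)-\frac{1-c}{2}\log(1-c)-H(\rho)$; substituting $c=|C|$ and the $H(\rho)$ above, the first three terms are precisely $H(\rho)$, so $\mathcal{D}(\rho)=0$, and therefore $|\mathcal{D}(\rho)|=0$. The boundary cases $C=\pm1$ cause no trouble because $\tfrac{1\mp C}{2}\log(1\mp C)\to 0$ there.

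There is essentially no obstacle here: the only point requiring a little care is the bookkeeping of the $\log\frac{1\pm C}{4}=\log(1\pm C)-2$ rewriting (the "$2$" appearing in Corollary \ref{vn_based_discord_in_gen} is exactly what is generated by this shift), and, if one instead prefers to obtain the result as $\lim_{(q,r)\to(1,1)}\mathcal{D}_{q,r}(\rho)$ of the Sharma-Mittal pointer discord, the only subtlety is taking the limits in the order $q\to r$ then $r\to1$ and applying L'Hôpital's rule, exactly as was done in the proof of Corollary \ref{vn_based_discord_in_gen}; this alternative computation serves as a consistency check and yields the same value $0$.
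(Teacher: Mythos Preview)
Your proposal is correct and follows essentially the same approach as the paper: both compute the von-Neumann entropy of the pointer state from its eigenvalues $\frac{1\pm C}{4}$ (each with multiplicity $2$) and then invoke Corollary~\ref{vn_based_discord_in_gen} to exhibit the cancellation. Your presentation is slightly more streamlined, since you first rewrite $H(\rho)$ in the form $2-\frac{1+C}{2}\log(1+C)-\frac{1-C}{2}\log(1-C)$ so that it matches Corollary~\ref{vn_based_discord_in_gen} verbatim, whereas the paper leaves $H(\rho)$ with arguments $\frac{1\pm C}{4}$, separately writes out the maximization term with arguments $\frac{1\pm C}{2}$, and then shows their difference plus $1$ vanishes; but this is a cosmetic difference only.
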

\begin{proof}
	The von-Neumann entropy for two qubit pointer state is given by
	\begin{equation}
	\begin{split}
	H^{(S)}_q(\rho) = & -2\left(\frac{1+C}{4}\right) \log \left(\frac{1+C}{4}\right) - 2\left(\frac{1-C}{4}\right) \log \left(\frac{1-C}{4}\right) \\
	= & -\left(\frac{1+C}{2}\right) \log \left(\frac{1+C}{4}\right) - \left(\frac{1-C}{2}\right) \log \left(\frac{1-C}{4}\right).
	\end{split}
	\end{equation}
	The maximization term is obtained by taking limits $r \rightarrow 1, q \rightarrow 1$ in the expression for $\max_{\theta}H_{q,r}(\rho^{(0)}) = \max_{\theta}H_{q,r}(\rho^{(1)})$. Therefore, the maximization term reduces to $-\frac{1+C}{2}\log\left(\frac{1+C}{2}\right)-\frac{1-C}{2}\log\left(\frac{1-C}{2}\right)$. From the corollary \ref{vn_based_discord_in_gen} the von-Neumann discord is given by
	\begin{equation}
	\begin{split} 
	1&-\left[-\left(\frac{1+C}{2}\right) \log \left(\frac{1+C}{4}\right) - \left(\frac{1-C}{2}\right) \log \left(\frac{1-C}{4}\right)\right]\\
	&+\left[-\frac{1+C}{2}\log\left(\frac{1+C}{2}\right)-\frac{1-C}{2}\log\left(\frac{1-C}{2}\right)\right] = 0.
	\end{split}
	\end{equation}
\end{proof}

The above result suggests that the new definition of discord matches with the existing idea of discord. In the figure \ref{togther_pointer}, we plot negativity, Sharma-Mittal, Renyi, and Tsallis discord for pointer state.
\begin{figure}
	\centering
	\includegraphics[scale = .3]{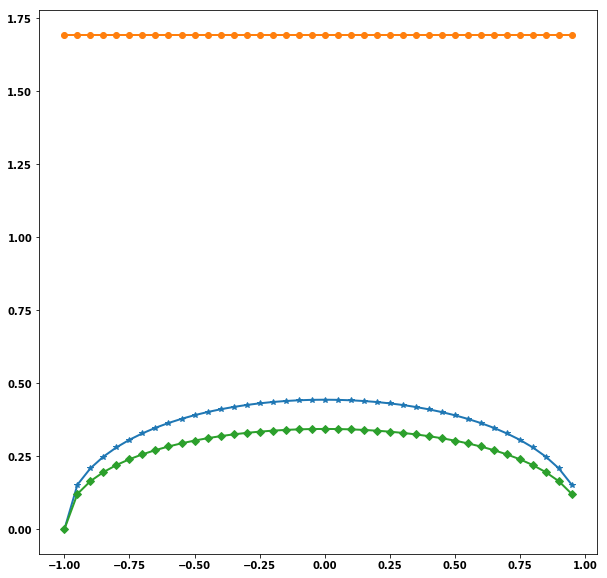}
	\caption{The Sharma-Mittal discord, R{\'e}nyi, and Tsallis discord of pointer state are represented by lines with markers $\star$, $\bullet$, and $\blacklozenge$, respectively. Here, we consider $q = .5, r = .4$. (Colour online)}
	\label{togther_pointer}
\end{figure}

\section{Conclusion and problems in future}

In this article, we have generalized the idea of quantum discord in terms of the Sharma-Mittal entropy. The R{\'e}nyi, Tsallis, and von-Neumann quantum discords are limiting cases of the Sharma-Mittal quantum discord. Analytic expressions of these discords are built up, for well-known $2$-qubit states. These new quantum correlations may be non-zero for zero entanglement states. In future, the following problems may be attempted. 

The von-Neumann quantum discord has been widely applied in quantum information theoretic tasks, for instance the remote state preparation \cite{dakic2012quantum}, device independent quantum cryptography \cite{pirandola2015high}, and many others. The new  discords may have better efficiency than the von-Neumann discord in these kind of works. For instance, the von-Neumann discord of pointer states is zero. Depending on the parameter values, the pointer states have non-zero Sharma-Mittal and other quantum discords. Hence, these new discords makes the pointer state applicable in quantum information theoretic tasks where the quantum correlation is an essential.

The pointer state has zero von-Neumann quantum discord. But, we have observed that the Sharma-Mittal, R{\'e}nyi, and Tsallis discords of the pointer state may not be zero. Given any of these definitions, there exist mixed states with zero discord. For instance, consider the Tsallis discord $\mathcal{D}^{(T)}_{.5}(\rho(p))$ of a Werner state depending on the parameter $p$. Note that, $\mathcal{D}^{(T)}_{.5}(\rho(p))$ is a continuous function of $p$. We can numerically verify that $\mathcal{D}^{(T)}_{.5}(\rho(.2546)) \times \mathcal{D}^{(T)}_{.5}(\rho(.2547)) < 0$. Therefore, there is a $p$ with $.2546 < p < .2547$ for which $\mathcal{D}^{(T)}_{.5}(\rho(p)) = 0$. Similarly, the Sharma-Mittal entropy $\mathcal{D}_{.5, .4}(\rho) = 0$, where $\rho$ is a Werner state given by a parameter $p$ such that $.2293 < p < .2294$. Now, classification of these zero discord states for different discords may also be an interesting task. 

The combinatorial aspects of von-Neumann discord is studied in \cite{dutta2017quantum}, \cite{dutta2017zero}, where a combinatorial graph represents a quantum state and the discord can be expressed in terms of graph theoretic parameters. A combinatorial study on Sharma-Mittal, R{\'e}nyi, and Tsallis discord in this direction will be very welcome.

It would be an interesting problem to formulate the Sharma-Mittal quantum discord for multipartite quantum states. In \cite{lee2017revealing} and \cite{rulli2011global} the von-Neumann discord has been investigated for tripartite and multipartite quantum states respectively. Interested reader may generalize these approaches for the Sharma-Mittal discord.

The generalization of von-Neumann quantum discord to Sharma-Mittal quantum discord introduces the idea of negative quantum correlation. This article contains examples of quantum states for which $\mathcal{D}_{q,r}(\rho) < 0$. Therefore, it demands further investigations on their applicability in the quantum information theoretic tasks, as well as a revision of the fundamental concepts of quantum correlation, such as non-negativity.

\section*{Acknowledgment}
SM and SD have equal contribution in this work. This article is similar to the one published in \textit{Quantum Information Processing 18 (6), 169}. The final authenticated version is available online at \url{https://link.springer.com/article/10.1007/s11128-019-2289-3}.


\end{document}